\newcounter{theorem}
\newtheorem{lemma}{Lemma}
\newtheorem{theorem}[lemma]{Theorem}
\newtheorem{proposition}[lemma]{Proposition}
\newtheorem{corollary}[lemma]{Corollary}
\let\geq\geqslant
\let\leq\leqslant
\newcommand{\IR}{\mathbb{R}}
\newcommand{\IZ}{\mathbb{Z}}
\newcommand{\cell}{\mathcal{C}}
\newcommand{\poincare}{\mathbb{H}}
\newcommand{\discrete}{\mathbb{B}}
\newcommand{\level}{\mathrm{lev}}
\newcommand{\rg}{\mathrm{rg}}
\DeclareMathOperator{\arsinh}{arsinh}
\newcounter{casenum}
\title{Embeddings and near-neighbor searching with constant additive error for hyperbolic spaces
\thanks{
This work was supported by the National Research Foundation of Korea (NRF) grant
funded by the Korea government (MSIT) (No. 2022R1F1A107586911).
}
}
\author{Eunku Park\\
School of Electrical and Computer Engineering\\
UNIST, Republic of Korea\\
\texttt{parkeun9@unist.ac.kr}
\and
Antoine Vigneron\thanks{Corresponding author}\\ 
School of Electrical and Computer Engineering\\ 
UNIST, Republic of Korea\\
\texttt{antoine@unist.ac.kr}
}
\begin{document}
\maketitle
\begin{abstract}
	We give an embedding of the Poincar\'e halfspace $\poincare^D$ into a discrete metric space
	based on a binary tiling of $\poincare^D$, with additive distortion $O(\log D)$.
	It yields the following results.
	We show that any subset $P$ of $n$ points in $\poincare^D$ 
	can be embedded into a graph-metric with $2^{O(D)}n$ vertices and edges, and with additive
	distortion $O(\log D)$. We also show how to construct, for any $k$, an $O(k\log D)$-purely 
	additive spanner of $P$ with $2^{O(D)}n$ Steiner	vertices and $2^{O(D)}n \cdot \lambda_k(n)$ edges, where 
	$\lambda_k(n)$ is the $k$th-row inverse Ackermann function. 
	Finally, we show how to construct an approximate Voronoi diagram for $P$ of size $2^{O(D)}n$.
	It allows us	to answer approximate near-neighbor queries in $2^{O(D)}+O(\log n)$ time,
	with additive error $O(\log D)$.
	These constructions can be done in $2^{O(D)}n \log n$ time.
\end{abstract}

%%%%%%%%%%%%%%%%%%%%%%%%%%%%%%%%%%%%%%%%%%%%%%%%%%%%%%%%%%%%%%%%%%%%%%%%%%%%%%%%%%%%%%%%%%%%%%%%%

\section{Introduction}

The Poincar\'e halfplane $\poincare^2$ is perhaps the most common model of hyperbolic spaces, together with the
Poincar\'e disk which it is isometric to. The points of $\poincare^2$ are the points $(x,z)$, $z>0$ of
the upper halfplane of $\IR^2$. The shortest paths in $\poincare^2$, called geodesics,
are arcs of circles orthogonal to
the $x$-axis (\figurename~\ref{fig:models}a) and the arc length is given by integrating 
the relation $ds^2=(dx^2+dz^2)/z^2$. More generally, the Poincar\'e halfspace is a $D$-dimensional
model of hyperbolic space, that consists of points $(x,z)$ where $x \in \IR^{D-1}$ and $z$ is a positive
real number. In $\poincare^D$, the expression of $ds^2$ is the same,
but $x$ is now a point in $\IR^{D-1}$, and geodesics are arcs of circles orthogonal to the hyperplane $z=0$.

Hyperbolic spaces behave very differently from Euclidean spaces in some respects. For instance, in fixed
dimension, the volume of a hyperbolic ball grows exponentially with its radius, while the radius
of a Euclidean ball grows polynomially. A triangle, formed by connecting three
points by the geodesic between each pair of these points, is thin in the sense that from any point
on an edge, there is a point on another edge at distance bounded by a constant.

As a consequence, hyperbolic spaces are sometimes more suitable than Euclidean spaces to represent 
some types of data. It has been shown, for instance, that there are better embeddings of the internet 
graph into  hyperbolic spaces, compared with its embeddings into Euclidean spaces~\cite{1354510}. There has also been recent interest
in hyperbolic spaces in the context of artificial neural networks~\cite{GaneaBH18}.

In this paper, we present embeddings of finite subsets of $\poincare^D$ into graph metrics
with a linear number of edges, and additive distortion $O(\log D)$.
As an application, we present an approximate near-neighbor data structure with 
$O(\log D)$ additive distortion. These two results have no multiplicative distortion.

\begin{figure}
	\centering
	\includegraphics[width=\textwidth]{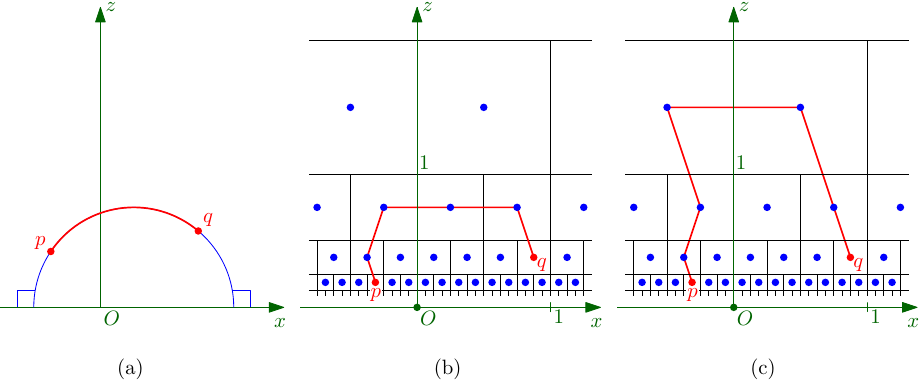}
	\caption{Three models of hyperbolic spaces.
	(a) The Poincar\'e halfplane $\poincare^2$. 
	(b) The first discrete model $(\discrete^2,d_1)$, with a shortest path of length
		$d_1(p,q)=5$.
	(c) The second discrete model $(\discrete^2,d_2)$, with a shortest path of length
		$d_2(p,q)=6$.
	\label{fig:models}}
\end{figure}

\subsection{Our results.} 
Given two metric spaces $(M,d)$ and $(M',d')$, we say that a mapping $\varphi:M \to M'$
is an {\it embedding with additive distortion}  $\Delta$
if for any two points $p,q \in M$, we have
\[
	d(p,q) -\Delta \leq d'(\varphi(p),\varphi(q)) \leq d(p,q)+\Delta.
\]

Our first result (Theorem~\ref{th:embedmain}) is an embedding of $\poincare^D$ with additive distortion
$O(\log D)$ into a discrete metric space $(\discrete^D,d_1)$ that is obtained
from a binary tiling of $\poincare^D$. In two dimensions, this binary tiling uses boxes whose sizes increase
exponentially with the $z$-coordinate. (See \figurename~\ref{fig:models}b.). A point
is placed at the center of each box, and the distance between two points $p$ and $q$ is the minimum number
of box boundaries that are crossed when going from $p$ to $q$.
(See Section~\ref{sec:models} for a detailed description.)

Given a subset $P$ of $n$ points of a metric space $(M,d)$, a {\it purely-additive spanner} of $P$
with distortion $\Delta$ and $n'$ {\it Steiner points} is a graph $G(V,E)$ where $S\subset M$, $|S|=n'$,
the points in $S$ are called Steiner points, $V=P \cup S$,  the length of any edge $pq$ is $d(p,q)$, and the
shortest path distance $d_G(\cdot)$ in this graph satisfies
\[
	d(p,q) \leq d_G(p,q) \leq d(p,q)+\Delta \quad \text{for all }p,q\in P.
\]

We show, for any subset $P$ of $n$ points of $(\discrete^D,d_1)$, 
how to construct in $2^{O(D)} n \log n$ time
a purely-additive spanner of $P$ with $2^{O(D)}n$ edges and Steiner vertices, and
distortion $2$. (Theorem~\ref{th:spanner_d1}.)

Based on the two results above, we first obtain an embedding of any subset $P$ of $n$ points
of $\poincare^D$ into a graph metric with $2^{O(D)}n$ vertices and edges, and
additive distortion $O(\log D)$. We also obtain an $O(k\log D)$ purely additive spanner of $P$
with $2^{O(D)}n$ vertices and $2^{O(D)}n  \cdot \lambda_k(n)$ Steiner vertices and edges, where $\lambda_k(n)$
is the $k$th-row inverse Ackermann function. 

Given a subset $P$ of $n$ points of a metric space $(M,d)$, and a query point $q \in P$, 
a {\it nearest neighbor} of $q$ is a point $p \in P$ such that $d(p,q)$ is minimum.
An {\it approximate nearest neighbor} (ANN) with additive distortion $\Delta$ is a point
$p' \in P$ such that $d(p',q) \leq d(p,q) +\Delta$. 

We give data structures for
answering ANN queries in $(\discrete^D,d_1)$ and in $\poincare^D$ with 
query time $2^{O(D)}+O(\log n)$ and construction time $2^{O(D)} n\log n$. For $(\discrete^D,d_1)$,
the additive distortion is 2, and for $\poincare^D$, it is $O(\log D)$. 
(See Theorem~\ref{th:nn2} and Corollary~\ref{cor:nnh}.) 
These data structures are in fact Approximate Voronoi Diagrams (AVD): They give a partition
of the space into $2^{O(D)}n$ regions, each region being associated with $2^{O(D)}$ representative points,
and such that for any query point in one of these regions, one of the representative points 
 is an approximate near neighbor.

\subsection{Comparison with previous work.} Spanners have been studied in the more general context
of an arbitrary weighted graph metric~\cite{AlthoferDDJS93}. For instance, one can find a $t$-multiplicative
spanner of total weight $(1+1/t)$ times the weight of a minimum spanning tree.

For Euclidean metrics, it was shown that there are spanners with a linear number
of edges, and multiplicative distortion arbitrarily close to 1~\cite{AryaDMSS95}. One difference with
this paper is that we consider an {\it additive} distortion. In the worst case, one cannot hope to 
find a non-trivial additive spanner for the Euclidean metric, because the additive error for a given
graph can be made arbitrarily larger by scaling up the input point set.

Gromov-hyperbolicity is a notion of hyperbolicity that applies to any metric space $(M,d)$,
including discrete metric spaces. These spaces have the property that for any 4 points $p,q,r,s \in M$,
the largest two sums of distances among $d(p,q)+d(r,s)$, $d(p,r)+d(q,s)$, $d(p,s)+d(q,r)$ differ 
by at most a constant $\delta$. 
As shown by Gromov~\cite{Gromov1987}, any metric space of $n$ points with hyperbolicity $\delta$ can be
embedded into a tree-metric with additive distortion $O(\delta \log n)$. The Poincar\'e half-space
has hyperbolicity $O(1)$, so this result applies to a more general type of hyperbolic spaces 
than our embedding of $\poincare^D$ 
into a graph metric with $O(n)$ edges. On the other hand, we obtain an additive distortion $O(1)$
in constant dimension $D=O(1)$, while Gromov's construction gives $O(\log n)$ in our case.
Chepoi et al.~\cite{ChepoiDEHVX12} give additive $O(\delta \log n)$ spanners with $O(\delta n)$ edges
for unit graph metrics (i.e. metrics for graphs where each edge weight is equal to 1) that are
$\delta$-hyperbolic. Compared with our result, it allows arbitrary hyperbolicity, but it is restricted
to unit graphs, and the distortion is logarithmic.
Lee and Krautghamer~\cite{KL06} present multiplicative spanners with $O(n)$ edges for locally doubling,
geodesic Gromov-hyperbolic spaces.

There has also been some work on problems other than spanners and embeddings in hyperbolic spaces. 
Lee and Krautghamer~\cite{KL06} presented an ANN data structure for
geodesic and locally doubling Gromov-hyperbolic spaces, with $O(1)$ additive error, 
$O(n^2)$ space usage and $O(\log^2 n)$ query time.
Kisfaludi-Bak et al. gave an algorithm for the TSP problem in $\poincare^2$~\cite{kisfaludibak}.

More recently, Kisfaludi-Bak and  van Wordragen~\cite{kisfaludibak2023quadtree} gave an ANN data
structure and Steiner spanners for $\poincare^D$ with {\it multiplicative} error $(1+\varepsilon)$.
Their approach is based on the same binary tiling that we use, but they derive from it
a non-trivial type of quadtree that is taylored for providing multiplicative guarantees in 
hyperbolic spaces.  The lower bound
that they present on spanners without Steiner points imply that in order to an achieve additive error
as we do, Steiner points are also required.

\subsection{Our approach.} 
We use an approximation of the Poincar\'e halfspace $(\poincare^D,d_H)$
by a discrete metric space $(\discrete^D,d_1)$ where the points are centers of hypercubes (called {\it cells}) 
whose sizes grow exponentially with $y$. (See \figurename~\ref{fig:models}b.) 
The distance between two points $p,q$ is the minimum number
of cells that are crossed when going from $p$ to $q$.
This discrete model was  mentioned,
for instance, by Cannon et al.~\cite{Cannon97}. 
We present these models of hyperbolic spaces in Section~\ref{sec:models}.

We introduce a different distance function $d_2$ over $\discrete^D$: we go upwards from $p$ and $q$
until we reach adjacent squares, and then we connect the subpaths using a single horizontal edge.
(See \figurename~\ref{fig:models}c.).
This model is not  a metric as it does not satisfy the triangle inequality. 
In Section~\ref{sec:spdiscrete},  we show that $d_1$ and $d_2$ differ by at most 2.  
Then in Section~\ref{sec:embedding}, we show that over $\discrete^D$, the metric $d_H$ 
differs from $d_1$ and $d_2$ by an additive error $O(\log D)$.

These discrete models lend themselves well to the use of quadtrees, and the efficient constructions 
of our embeddings, spanners and ANN-data structure, as well as the bound on their size, are based 
on compressed quadtrees.
In Section~\ref{sec:quadtree}, we present the compressed quadtree data structure that we use.

This structure allows us, in Section~\ref{sec:spannerdiscrete}, to show that the
overlay of the shortest paths according to $d_2$ has linear complexity, and hence gives a 2-additive spanner of
linear size with respect to $d_1$. (See \figurename~\ref{fig:size}.)
In Section~\ref{sec:hspanner}, we use the spanner 
we constructed in the discrete model to obtain an embedding  with $O(\log D)$ additive 
distortion into the metric of a graph that has $2^{O(D)}n$ vertices and edges.
Then using a transitive-closure spanner, we turn it into a spanner with respect to $d_H$. 
This spanner is embedded in $\poincare^D$, and has $2^{O(D)}n$ edges and Steiner points.

Our data structure for approximate near neighbor (ANN) searching is presented in Section~\ref{sec:NN}.
This data structure is an approximate Voronoi diagram for the discrete model $\discrete^D$, 
that is based on compressed quadtrees. This construction is done in two steps. We first compute the
subdivision induced by the (minimal) quadtree that records the input point set $P$. Then we refine this
subdivision by adding additional quadtree boxes where it is needed.

%%%%%%%%%%%%%%%%%%%%%%%%%%%%%%%%%%%%%%%%%%%%%%%%%%%%%%%%%%%%%%%%%%%%%%%%%%%%%%%%%%%%%%%%%%%%%%%%%

\section{Models of hyperbolic spaces}\label{sec:models}

\figurename~\ref{fig:models} shows the three models of hyperbolic spaces that we consider
in this paper, when $D=2$.
The  Poincar\'e halfplane $(\poincare^2,d_H)$ is shown in \figurename~\ref{fig:models}a.
It is a model of 2-dimensional hyperbolic space with constant negative curvature $-1$.
The halfplane $\poincare^2$ consists of the points $p=(x(p),z(p))$ where $z(p)>0$. 
The geodesics in $(\poincare^2,d_H)$ are arcs of semi-circles that are orthogonal to the $x$-axis. 
(See \figurename~\ref{fig:models}a.)

More generally, when $D \geq 2$, the Poincar\'e halfspace $\poincare^D$ consists of the points 
$p=(x(p),z(p))$ where $x\in \IR^{D-1}$ and $z$ is a positive real number.
The distance $d_H(p,q)$ between two points
$p$ and $q$ is the length of the  geodesic from $p$ to
$q$, where the arc-length is given by the relation $ds^2=(dx^2+dz^2)/z^2$. 
This distance is given by the expression
\[
	d_H(p,q)=2\arsinh\left( \frac 12
		{
			\frac{\|pq\|}{\sqrt{z(p)z(q)}}
		}
	\right)
\]
where $\|\cdot\|$ is the Euclidean norm.
In the special case where $x(p)=x(q)$ and $z(q)>z(p)$, it is simply $d_H(p,q)=\ln(z(q)/z(p))$. 
The geodesics in $(\poincare^D,d_H)$ are arcs of semi-circles that are orthogonal to the 
hyperplane $z=0$. 

The first discrete model $(\discrete^D,d_1)$  is defined
using a binary tiling of $\poincare^D$ with the hypercubes, called {\it cells}, 
$$\cell=[k_12^i,(k_1+1)2^i] \times [k_22^i,(k_2+1)2^i] \times  \dots \times 
[k_{D-1}2^i,(k_{D-1}+1)2^i]  \times [2^i,2^{i+1}]$$
where $k_1,k_2,\dots,k_{D-1},i \in \IZ$. 
(See  \figurename~\ref{fig:models}b for an example when $D=2$.)
The {\it level} of this cell $\cell$ is the integer $\level(\cell)=i$,
and its {\it width}  is $2^i$.

The {\it parent} of a cell $\cell$ at level $i$ is the cell at level $i+1$ whose bottom facet
contains the upper facet of $\cell$. The {\it children} of $\cell$ are the cells whose parent is $\cell$,
hence $\cell$ has $2^{D-1}$ children. An ancestor of $\cell$ is either the parent of $\cell$, or a parent of
an ancestor of $\cell$. A descendant of $\cell$ is either a child of $\cell$, or a descendant of a child
of $\cell$.
The {\it horizontal neighbors} of $\cell$ are the cells at level $i$ different from $\cell$ 
that intersect $\cell$ along its boundary, hence $\cell$ has $3^{D-1}-1$ horizontal neighbors.

We denote by $b(\cell)$ the center of the cell $\cell$, so the $z$-coordinate of $b(\cell)$ 
is $z(b(\cell))=3 \cdot 2^{i-1}$ when $\level(\cell)=i$.
Then $\discrete^D$ is the set of the points $b(\cell)$ for all the cells $\cell$.
The level $\level(b(\cell))$ of the point $b(\cell)$ is the level of $\cell$. The parent, children, 
ancestors, descendants and horizontal
neighbors of $\cell$ are the point $b(\cell')$ where $\cell'$ is a parent, child 
ancestors, descendants and horizontal neighbors of $\cell$, respectively.

From a point $p \in \discrete^D$, we allow the following types of moves:
\begin{itemize}
	\item An {\it upward move} to the parent of $p$.
	\item A {\it downward move} to a child of $p$.
	\item A {\it horizontal move} to a horizontal neighbor of $p$.
		(Hence, when moving horizontally, we allow to move along diagonals.)
\end{itemize}
(See \figurename~\ref{fig:BD}.)

\begin{figure}
	\centering
	\includegraphics[scale=.55]{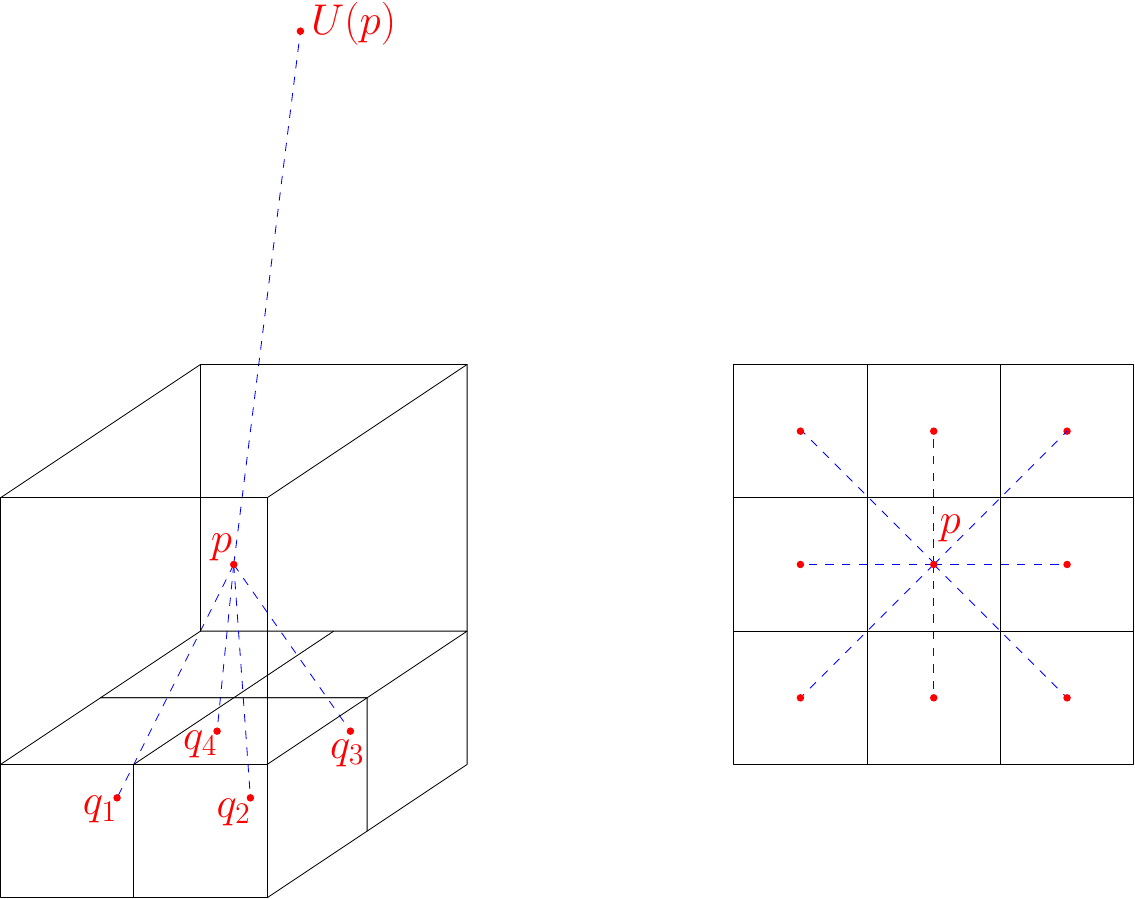}
	\caption{Possible moves in $\discrete^3$. (left) Vertical moves $U(p)$ and $D(p)\in\{q_1,q_2,q_3,q_4\}$.
		(right) The 8 Horizontal moves $H(p)$, seen from above.\label{fig:BD}}
\end{figure}

Then for any $p,q \in \discrete^D$, we define $d_1(p,q)$ to be the minimum number of moves that are needed
to reach $q$ from $p$, and we define $d_2(p,q)$ to be the length of the shortest path from $p$ to $q$ that
has at most one horizontal move. \figurename~\ref{fig:models}(b) and (c) show examples of
shortest paths in these models. A path from $p$ to $q$ consisting of $d_1(p,q)$ such moves is called
a {\it $d_1$-path}. Similarly, the path from $p$ to $q$ consisting of $d_2(p,q)$ moves,
at most one of which being horizontal, is called a {\it $d_2$-path}. For any $p,q$, there
is only one $d_2$-path from $p$ to $q$: If $p$ is neither a descendant nor an ancestor of $q$, then
this path bends at the lowest ancestors of $p$ and $q$ that are horizontal neighbors.

\begin{figure}
	\centering
	\includegraphics[scale=.7]{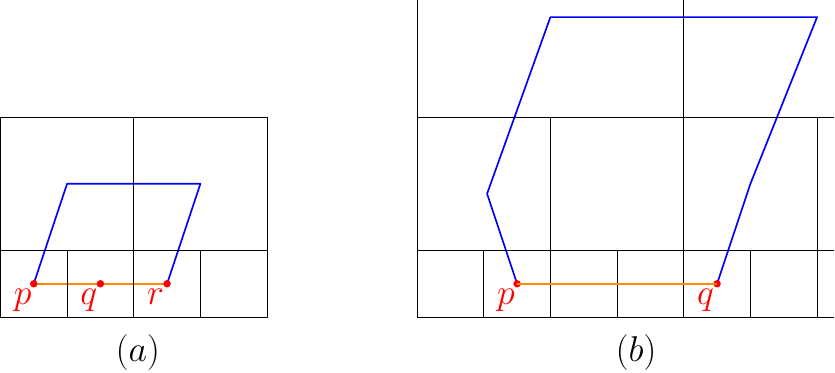}
	\caption{(a) A counterexample that shows that $d_2$ does not obey the triangle inequality:
		we have $d_2(p,q)=1$, $d_2(q,r)=1$ and $d_2(p,r)=3$
		(b) An example where $d_2(p,q)=5=2+d_1(p,q)$.
	\label{fig:moves}}
\end{figure}

The space $(\discrete^D,d_2)$ is not a metric space, but a semi-metric space, as
it does not satisfy the triangle inequality. (See counterexample in \figurename~\ref{fig:moves}a).
%However, as we will show in Theorem~\ref{th:discmain}, it is quasi-isometric
%to $(\discrete^2,d_1)$, and in this sense we will consider it to be hyperbolic.  

%%%%%%%%%%%%%%%%%%%%%%%%%%%%%%%%%%%%%%%%%%%%%%%%%%%%%%%%%%%%%%%%%%%%%%%%%%%%%%%%%%%%%%%%%%%%%%%%%

\section{Shortest paths in the discrete model} \label{sec:spdiscrete}

In this section, we study the structure, and give bounds on the length of $d_1$-paths
and $d_2$-paths.

\begin{lemma}\label{lem:d1shortest_prel}
	Let $p,q \in \discrete^D$, and suppose that there is a $d_1$-path from $p$ to 
	$q$ containing at least one upward move. Then there is a $d_1$-path from
	$p$ to $q$ whose first move is upward.
\end{lemma}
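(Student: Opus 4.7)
My plan is an exchange argument. Among all $d_1$-paths from $p$ to $q$, I would pick one, say $P$, that minimizes the number of moves appearing before its first upward move. Call this number $k$. If $k=0$, the statement holds, so I would argue for contradiction that $k \geq 1$. Let $m$ be the first upward move of $P$ and $m'$ the move immediately preceding $m$. By the choice of $m$, $m'$ is either horizontal or downward.

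I would first rule out that $m'$ is downward. If $m'$ descends from some cell $c$ to one of its children $c'$, then $m$ goes from $c'$ to its parent, which is $c$ again. Deleting both moves produces a strictly shorter walk from $p$ to $q$, contradicting that $P$ is a $d_1$-path. So $m'$ must be horizontal: let $a$ be the cell before $m'$, $a'$ the cell after $m'$ (a horizontal neighbor of $a$ at the same level $i$), and $a''$ the parent of $a'$ reached by $m$.

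The key geometric claim is that the parent of $a$ is either equal to the parent of $a'$ or a horizontal neighbor of it. This follows from the coordinate description in Section~\ref{sec:models}: horizontal neighborhood at level $i$ means the integer indices $k_j$, $k'_j$ in each of the $D-1$ horizontal directions satisfy $|k_j-k'_j|\leq 1$, and the parents are indexed by $\lfloor k_j/2\rfloor$ and $\lfloor k'_j/2\rfloor$, which also differ by at most $1$ in each coordinate. If the two parents coincide, I would replace $m' m$ by a single upward move from $a$, giving a strictly shorter path from $p$ to $q$ and contradicting that $P$ is a $d_1$-path. Otherwise, I would replace $m' m$ by an upward move from $a$ to its parent followed by a horizontal move to the parent of $a'$; this preserves the total length but decreases the number of initial non-upward moves from $k$ to $k-1$, contradicting the minimal choice of $P$.

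The only nontrivial step is the parent-adjacency claim for horizontal neighbors, and I expect this to be the main point to verify carefully, since it relies on the fact that horizontal moves allow diagonals (otherwise parents of horizontal neighbors need not be horizontal neighbors). The rest is a standard swap-and-minimality argument, which I would present in the order above: define $P$, eliminate the downward case, establish the parent-adjacency fact, then conclude by swapping in the horizontal case.
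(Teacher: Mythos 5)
Your proof is correct and follows essentially the same exchange argument as the paper: locate the first upward move, cancel a preceding downward move, or swap a preceding horizontal move across the parent level using the fact that horizontal neighbors have equal or horizontally adjacent parents. You additionally supply the floor-of-indices justification for the parent-adjacency fact, which the paper merely asserts, and you phrase the iteration as an extremal-choice contradiction rather than an explicit repetition; both are cosmetic differences from the paper's argument.
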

\begin{proof}
	Let $\rho=(p_0,p_1,\dots,p_k)$ be a path of length $k=d_1(p_0,p_k)$ from $p=p_0$ to $q=p_k$,
	containing at least one upward move. If the first move of $\rho$ is upward, then we
	are done. Otherwise,
	Let $(p_i,p_{i+1})$ be the first upward move, hence $(p_{i-1},p_i)$ is horizontal
	or downward. If $(p_{i-1},p_i)$ is downward, then we have $p_{i-1}=p_{i+1}$, so
	the path obtained from $\rho$ by
	deleting $p_{i}$ and $p_{i+1}$ is a path from $p$ to $q$ of length $d_1(p,q)-2$,
	a contradiction.
	
	Now suppose that $(p_{i-1},p_i)$ is horizontal. As $p_{i-1}$ and $p_i$ are horizontal
	neighbors, their parents either are equal, or are horizontal neighbors. So the parent of $p_{i-1}$
	is either $p_{i+1}$, or is 
	a horizontal neighbor of $p_{i+1}$. The parent of $p_{i-1}$ cannot be $p_{i+1}$,
	because if it were the case, we would obtain a path from $p$ to $q$ of length
	$d_1(p,q)-1$ by deleting $p_i$ from $\rho$. So the parent of $p_{i-1}$ is a horizontal neighbor
	$p'_i$ of $p_{i+1}$. Hence, we obtain a path $\rho'$ from $p$ to $q$ of length $d_1(p,q)$
	by replacing $p_i$ with $p'_i$. 
	
	The first upward move of $\rho'$ is now $p_{i-1}p'_i$,
	hence it has been moved one position to the left. By repeating this process $i-1$ times,
	we find a $d_1$-path from $p$ to $q$ that starts with an 
	upward move.
\end{proof}

\begin{lemma}\label{lem:d1shortest}
	For any $p,q \in \discrete^D$, there is a $d_1$-path  from $p$ to $q$ that
	consists of $m_1$ upward moves, followed by $m_2$ horizontal moves, and finally $m_3$ downward
	moves, where $m_1,m_2,m_3 \geq 0$.
\end{lemma}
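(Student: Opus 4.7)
The plan is to bootstrap from Lemma~\ref{lem:d1shortest_prel} in two stages: first, push every upward move to the front of the path, then use path reversal to push every downward move to the end.

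\textbf{Stage 1 (upward moves to the front).} I would proceed by induction on the length $k$ of the $d_1$-path $\rho=(p_0,\dots,p_k)$ from $p$ to $q$. If $\rho$ has no upward move there is nothing to do. Otherwise, Lemma~\ref{lem:d1shortest_prel} yields a $d_1$-path $\rho'$ whose first move $p_0\to p'_1$ is upward. The suffix $(p'_1,\dots,p_k)$ is itself a $d_1$-path from $p'_1$ to $q$, because any shorter alternative, prepended with $p_0\to p'_1$, would contradict the minimality of $d_1(p,q)$. Applying the inductive hypothesis to this suffix (whose length $k-1$ is strictly smaller) and then prepending the initial upward move gives a $d_1$-path whose upward moves form a prefix block, followed by a suffix $\sigma$ that contains only horizontal and downward moves.

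\textbf{Stage 2 (downward moves to the end).} The reversed path $\sigma^{R}$ is a $d_1$-path between its endpoints: length is preserved under reversal, and a strictly shorter rival would reverse back to a strictly shorter $\sigma$. Under reversal, a downward move becomes an upward move (since "$p$ is a child of $q$" is equivalent to "$q$ is the parent of $p$") and a horizontal move stays horizontal (the horizontal-neighbor relation is symmetric). Applying Stage 1 to $\sigma^{R}$ produces a $d_1$-path in which the upward moves form a prefix; reversing back groups the downward moves into a suffix of $\sigma$. Concatenating with the initial upward block yields the desired decomposition $m_1$ upward, $m_2$ horizontal, $m_3$ downward.

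\textbf{Main obstacle.} The argument is mostly bookkeeping; the only subtle point is making sure that each invocation of Lemma~\ref{lem:d1shortest_prel} is legitimate, that is, that the path it is applied to really is a $d_1$-path between its endpoints. This is handled by the observation above that any sub-range (in particular, the reverse) of a $d_1$-path is again a $d_1$-path between its own endpoints; without this step, a naive rearrangement might shorten some intermediate segment and produce a path that is no longer shortest overall. Once this is in place, the two stages combine cleanly to give the claimed structural form.
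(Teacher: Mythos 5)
Your proof is correct and follows essentially the same route as the paper: repeatedly pull upward moves to the front via Lemma~\ref{lem:d1shortest_prel}, then handle the downward moves by the reversal trick. The only substantive difference is that you formalize the ``apply the lemma repeatedly'' step as an induction on path length and explicitly verify that suffixes and reversals of $d_1$-paths remain $d_1$-paths between their own endpoints --- details the paper leaves implicit. This is a modest but genuine gain in rigor over the paper's phrasing, not a different argument.
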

\begin{proof}
	By applying Lemma~\ref{lem:d1shortest_prel} repeatedly, we obtain a $d_1$-path from $p$ to
	$q$	consisting of a sequence $\alpha$ of upward moves, followed by a sequence $\beta$ of horizontal or
	downward moves. Let $\gamma$ be the path obtained by following $\beta$ backward, hence
	$\gamma$ consists of upward or horizontal moves. We apply Lemma~\ref{lem:d1shortest_prel} 
	to $\gamma$, until we obtain a path $\gamma$ consisting of upward moves followed by
	horizontal moves. Then the path obtained by following $\alpha$, and then following $\gamma$ 
	backwards, has the desired property.
\end{proof}

Let $p,q \in \discrete^D$ be two points at the same level $i$.
Their {\it horizontal distance} $\lambda(p,q)$ is the length of a shortest path from $p$ to $q$
consisting of horizontal moves only. It is given by the expression
$\lambda(p,q)=\|x(p)x(q)\|_\infty/2^i$ where $\|\cdot\|_\infty$ is the $L^\infty$ norm.

\begin{lemma}\label{p:lambda}
	If $p$ and $q$ are two points of $\discrete^D$ and $\level(p)=\level(q)$, then 
	$2\lambda(p',q')-1 \leq  \lambda(p,q) \leq 2\lambda(p',q')+1$,
	where $p'$ and $q'$ are the parents of $p$ and $q$, respectively
\end{lemma}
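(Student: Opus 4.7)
The plan is to work directly with the closed-form expression $\lambda(p,q) = \|x(p) - x(q)\|_\infty / 2^i$, where $i = \level(p) = \level(q)$, and to quantify how cell centers shift when passing from a cell to its parent.

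First I would determine the horizontal offset between a cell and its parent. A level-$i$ cell with horizontal indices $(k_1, \ldots, k_{D-1})$ has its parent at level $i+1$ with indices $(\lfloor k_1/2 \rfloor, \ldots, \lfloor k_{D-1}/2 \rfloor)$, since the parent's bottom facet must contain the cell's upper facet. Writing the two centers explicitly then shows that each coordinate of $x(p') - x(p)$ equals $\pm 2^{i-1}$, with the sign depending on the parity of the corresponding $k_j$; the same statement holds for $x(q') - x(q)$.

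Next I would set $\varepsilon = (x(p') - x(p)) - (x(q') - x(q))$. Because every coordinate of each bracketed vector is $\pm 2^{i-1}$, every coordinate of $\varepsilon$ lies in $\{-2^i, 0, 2^i\}$, so $\|\varepsilon\|_\infty \leq 2^i$. The identity $x(p') - x(q') = (x(p) - x(q)) + \varepsilon$ together with the reverse triangle inequality for the $L^\infty$ norm then yields
\[
	\bigl|\, \|x(p')-x(q')\|_\infty - \|x(p)-x(q)\|_\infty \,\bigr| \;\leq\; 2^i.
\]
Dividing through by $2^i$ and using $\lambda(p',q') = \|x(p')-x(q')\|_\infty / 2^{i+1}$ gives $|2\lambda(p',q') - \lambda(p,q)| \leq 1$, which is exactly the claimed two-sided bound.

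There is no real obstacle here. The only thing to watch for is keeping the parent indexing straight, so that the factor $2^{i+1}$ in the denominator of $\lambda(p',q')$ interacts correctly with the bound $\|\varepsilon\|_\infty \leq 2^i$ and produces the additive constant $1$ rather than something larger.
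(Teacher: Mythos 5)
Your proof is correct and follows essentially the same route as the paper: bound the coordinate-wise offset between each point and its parent center (you get exactly $\pm 2^{i-1}$, where the paper allows $\{-2^{i-1},0,2^{i-1}\}$, a harmless difference), apply the reverse triangle inequality in $L^\infty$, and divide by $2^i$. The only cosmetic difference is that you package the two offsets into a single vector $\varepsilon$ before invoking the triangle inequality.
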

\begin{proof}
	Let $\Delta_i=\{-2^{i-1},0,2^{i-1}\}^{D-1}$, then we have $\{x(p')-x(p),x(q')-x(q)\} \subset \Delta_i$.
	By the triangle inequality, it implies that
	$$\|x(p')x(q')\|_{\infty}-2^i \leq \|x(p)x(q)\|_\infty \leq \|x(p')x(q')\|_{\infty}+2^i.$$
	The result follows by dividing these inequalities by $2^i$.
\end{proof}

We now obtain a recurrence relation for $d_1(p,q)$ when $p$ and $q$ are at the same level.
\begin{lemma}\label{lem:d1samerow}
	If $p$ and $q$ are two points of $\discrete^D$ such that $\level(p)=\level(q)$, 
	and if $\lambda=\lambda(p,q)$ is their horizontal distance, then 
	$$d_1(p,q) = \begin{cases}
		\lambda & \text{ if } \lambda \leq 4, \text{ and} \\ 
		2 + d_1(p', q') & \text{ if } \lambda \geq 5. 
		\end{cases}$$
		where $p'$ and $q'$ are the respective parents of $p$ and $q$.
\end{lemma}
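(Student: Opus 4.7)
The plan is to use Lemma~\ref{lem:d1shortest} to reduce $d_1(p,q)$ to a one-parameter minimization over the number of upward moves, and then combine this with Lemma~\ref{p:lambda} plus a short case analysis on $\lambda$.

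By Lemma~\ref{lem:d1shortest}, any $d_1$-path from $p$ to $q$ is a block of $m_1$ upward moves, then $m_2$ horizontal moves, then $m_3$ downward moves. Since $\level(p)=\level(q)$, the path must return to its starting level, forcing $m_1=m_3=:m$; the horizontal portion runs between the $m$-th ancestors of $p$ and $q$, and its length is exactly their horizontal distance. Writing $p^{(m)}$ for the $m$-th ancestor of $p$, this gives
\[
        d_1(p,q) \;=\; \min_{m\geq 0}\bigl(2m + \lambda(p^{(m)}, q^{(m)})\bigr).
\]
Separating the term $m=0$ and using $p^{(m)}=(p')^{(m-1)}$ for $m\geq 1$ collapses the tail into $d_1(p',q')$, yielding the clean recurrence
\[
        d_1(p,q) \;=\; \min\bigl(\lambda,\;2+d_1(p',q')\bigr).
\]

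When $\lambda \geq 5$, I would combine the trivial bound $d_1(p',q') \leq \lambda(p',q')$ (direct horizontal path) with the upper half of Lemma~\ref{p:lambda}, $\lambda(p',q') \leq (\lambda+1)/2$, to conclude $2+d_1(p',q') \leq (\lambda+5)/2 \leq \lambda$. Hence the minimum is attained by the second term, giving $d_1(p,q)=2+d_1(p',q')$.

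For $\lambda \leq 4$ I need the reverse inequality $d_1(p',q')\geq \lambda-2$. The cases $\lambda \in \{0,1,2\}$ are immediate since $d_1\geq 0$. For $\lambda=3$, I would observe that any two siblings have horizontal distance at most $1$ (their centers differ by one unit in each coordinate and diagonal moves are allowed), so $\lambda=3$ forces $p'\neq q'$ and hence $d_1(p',q')\geq 1$. For $\lambda=4$, suppose for contradiction that $d_1(p',q')\leq 1$; then $p'=q'$ or $p',q'$ are horizontal neighbors, so $\lambda(p',q')\leq 1$, but Lemma~\ref{p:lambda} would then give $\lambda \leq 2\cdot 1+1=3$, contradicting $\lambda=4$; hence $d_1(p',q')\geq 2$. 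In every subcase $2+d_1(p',q')\geq \lambda$, so the minimum in the recurrence is $\lambda$. The main obstacle is the narrow window $\lambda\in\{3,4\}$: the purely arithmetic inequality from Lemma~\ref{p:lambda} is not by itself sharp enough there, so one has to exploit the structural facts that siblings have horizontal distance at most $1$ and that a single move between same-level points covers horizontal distance at most $1$.
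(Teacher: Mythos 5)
Your proof is correct and follows essentially the same approach as the paper: reduce to canonical up--horizontal--down shortest paths via Lemma~\ref{lem:d1shortest}, then compare the all-horizontal option against the go-through-parents option, using Lemma~\ref{p:lambda} to settle the comparison. You package it slightly more explicitly as the recurrence $d_1(p,q)=\min(\lambda,\,2+d_1(p',q'))$ and spell out the borderline cases $\lambda\in\{3,4\}$ (via the sibling observation and a short contradiction), whereas the paper leaves those structural facts implicit; also note Lemma~\ref{lem:d1shortest} only asserts the \emph{existence} of a canonical shortest path, not that every $d_1$-path has that form, but this does not affect your argument since one canonical shortest path suffices.
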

\begin{proof}
	By Lemma~\ref{lem:d1shortest}, there is a shortest path from $p$ to $q$
	that either consists of horizontal moves only (type 1), or that  goes through
	$p'$ and $q'$ (type 2).
	
	Any path of type 2 has length at least 3. So if $\lambda \leq 3$, there
	is a shortest path of type 1, and thus $d_1(p,q)=\lambda$.
	
	If $\lambda=4$, then by Lemma~\ref{p:lambda}, we have $\lambda(p',q') \geq 2$,
	so any path of type 2 has length more than 3. So there is a shortest path 
	of type 1, and thus $d_1(p,q)=\lambda$.
	
	Now suppose that $\lambda \geq 5$. Then by Lemma~\ref{p:lambda}, we have
	$\lambda(p',q') \leq (1+\lambda)/2 \leq \lambda-2$, and thus there is a shortest
	path which is of type 2. It follows that $d_1(p,q)=2+d_1(p',q')$.
\end{proof}

\begin{lemma} \label{lem:ineqsamerow}
	If $p$ and $q$ are two points of $\discrete^D$ and $\level(p)=\level(q)$, then
	$d_2(p,q) \leq d_1(p,q)+2$. 
\end{lemma}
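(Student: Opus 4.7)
The plan is to proceed by strong induction on the horizontal distance $\lambda = \lambda(p,q)$, with the case split driven by Lemma~\ref{lem:d1samerow}.

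For the base case $\lambda \leq 4$, Lemma~\ref{lem:d1samerow} gives $d_1(p,q) = \lambda$, so I only need a cheap upper bound on $d_2$. I would exhibit an explicit candidate $d_2$-path by climbing to a low-enough common level, iterating Lemma~\ref{p:lambda} to control how fast the horizontal gap shrinks. A single level up already forces $\lambda' \leq (\lambda+1)/2 \leq 1$ when $\lambda \in \{0,1,2\}$, and two levels up suffice when $\lambda \in \{3,4\}$ (since $\lambda' \leq 2$ and then $\lambda'' \leq 1$). In each of these five subcases, $d_2(p,q) \leq 2k+1$ where $k$ is the number of upward levels used, and a quick tabulation confirms $d_2(p,q) \leq \lambda + 2$.

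For the inductive step $\lambda \geq 5$, Lemma~\ref{lem:d1samerow} gives $d_1(p,q) = 2 + d_1(p',q')$, where $p',q'$ are the parents, and Lemma~\ref{p:lambda} yields $\lambda(p',q') \leq (\lambda+1)/2 < \lambda$, so the induction hypothesis applies to $(p',q')$. The key observation is that prepending the upward move $p \to p'$ and appending the downward move $q' \to q$ to the $d_2$-path from $p'$ to $q'$ produces a valid path from $p$ to $q$ containing at most one horizontal move (the $d_2$-path of $p',q'$ already uses at most one horizontal move, and the two new moves are vertical). Hence
\[
d_2(p,q) \;\leq\; d_2(p',q') + 2 \;\leq\; (d_1(p',q') + 2) + 2 \;=\; d_1(p,q) + 2,
\]
closing the induction.

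I do not foresee any serious obstacle. The only mild subtlety is verifying that the recursion is well-founded (i.e., $\lambda' < \lambda$ whenever we recurse), which follows immediately from the $\lambda \geq 5$ hypothesis and Lemma~\ref{p:lambda}. The base case is routine bookkeeping and could be compressed into a single table; the inductive concatenation argument is the conceptual heart of the proof.
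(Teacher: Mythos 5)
Your proof takes essentially the same route as the paper's: strong induction on $\lambda$, explicit handling of the base cases $\lambda \leq 4$ via Lemma~\ref{p:lambda}, and for $\lambda \geq 5$ pairing the $d_1$-recursion from Lemma~\ref{lem:d1samerow} with the bound $d_2(p,q) \leq d_2(p',q') + 2$ obtained by concatenating the up/down moves onto the parents' $d_2$-path (the paper states this as an equality when $\lambda \geq 2$, but your inequality is all that is needed). One minor bookkeeping slip in the base case: the blanket bound $d_2 \leq 2k+1$ with $k=1$ gives $3 > \lambda + 2 = 2$ when $\lambda = 0$; but $p = q$ there, so $d_2 = 0$ directly, and your promised tabulation closes once this trivial subcase is treated on its own.
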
 
\begin{proof}
	Let $p'$ and $q'$ be the parents of $p$ and $q$, respectively.
	Let $p''$ and $q''$ be the parents of $p'$ and $q'$, respectively.
	Let $\lambda=\lambda(p,q)$, $\lambda'=\lambda(p',q')$ and $\lambda''=\lambda(p'',q'').$
	We make a proof by induction on $\lambda$.
	
	We first handle the basis cases.
	\begin{itemize}
		\item If $\lambda=0$, then $p=q$ and thus $d_2(p,q)=d_1(p,q)=0$.
		\item If $\lambda=1$, then $p$ and $q$ are horizontal neighbors, and thus $d_2(p,q)=d_1(p,q)=1$.
		\item If $\lambda=2$, then $\lambda'=1$ by Lemma~\ref{p:lambda}.
			It follows that $d_2(p,q)=3$ and $d_1(p,q)=2$.
		\item If $\lambda=3$, then $1 \leq \lambda' \leq 2$, and $\lambda'' \leq 1$ 
			by Lemma~\ref{p:lambda}.
			It follows that $d_1(p,q)=3$ and $d_2(p,q) \leq 5$.
		\item If $\lambda=4$, then $\lambda'=2$ and $\lambda''=1$. It follows that
			$d_1(p,q)=4$ and $d_2(p,q)=5$.
	\end{itemize}
	
	Now suppose that $\lambda \geq 5$. It follows from Lemma~\ref{p:lambda} that 
	$3 \leq \lambda' \leq (1+\lambda)/2 < \lambda$. So by the induction hypothesis,
	we have $d_2(p',q')\leq d_1(p',q')+2$. As $\lambda \geq 2$, we have
	$d_2(p,q)=d_2(p',q')+2$, and thus 	$d_2(p,q) \leq d_1(p',q')+4$.
	By Lemma~\ref{lem:d1samerow}, we also
	have $d_1(p,q)=2+d_1(p',q')$. It follows that
	$
	  d_2(p,q) \leq d_1(p,q)+2.
	$

\end{proof}

We can now prove the main result of this section.

\begin{theorem}\label{th:discmain}
	For any two points $p,q \in \discrete^2$, we have
	$
		d_1(p,q) \leq d_2(p,q) \leq d_1(p,q)+2. 
	$
\end{theorem}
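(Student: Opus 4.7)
The lower bound $d_1(p,q)\le d_2(p,q)$ is immediate: any path counted by $d_2$ (one satisfying the ``at most one horizontal move'' restriction) is in particular an arbitrary sequence of allowed moves, so $d_1$, being the minimum over all such sequences, cannot exceed $d_2$.

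For the upper bound, the plan is to combine Lemma~\ref{lem:d1shortest} with the same-level inequality from Lemma~\ref{lem:ineqsamerow}. By Lemma~\ref{lem:d1shortest}, there is a $d_1$-path from $p$ to $q$ that first makes $m_1$ upward moves from $p$ to some point $\tilde{p}$, then $m_2$ horizontal moves from $\tilde{p}$ to some point $\tilde{q}$, then $m_3$ downward moves from $\tilde{q}$ to $q$, with $m_1+m_2+m_3=d_1(p,q)$. Because only the middle block is horizontal and horizontal moves preserve level, $\tilde{p}$ and $\tilde{q}$ lie at the same level, so Lemma~\ref{lem:ineqsamerow} applies and yields $d_2(\tilde{p},\tilde{q})\le d_1(\tilde{p},\tilde{q})+2\le m_2+2$.

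I would then concatenate three pieces: the $m_1$ upward moves from $p$ to $\tilde{p}$, the $d_2$-path from $\tilde{p}$ to $\tilde{q}$ (which, by the structural description of $d_2$-paths given in Section~\ref{sec:models}, consists of upward moves, then at most one horizontal move, then downward moves), and the $m_3$ downward moves from $\tilde{q}$ to $q$. The resulting path uses at most one horizontal move overall, hence is a legitimate candidate for $d_2(p,q)$, and its length is at most $m_1+(m_2+2)+m_3 = d_1(p,q)+2$. This gives $d_2(p,q)\le d_1(p,q)+2$.

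The only point that requires care, and the one I would treat as the main (minor) obstacle, is checking that the concatenated path really contains at most one horizontal move. This reduces to observing that the up--horizontal--down shape of the $d_2$-path between $\tilde{p}$ and $\tilde{q}$ glues onto pure up and pure down segments without introducing any extra horizontal step, so the single horizontal move in the middle is the only one in the whole path. Degenerate cases ($m_1=0$, $m_3=0$, or $m_2=0$ with $\tilde{p}=\tilde{q}$, including the case when $p$ is an ancestor or descendant of $q$) fit into the same argument with no change.
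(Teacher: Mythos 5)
Your proof is correct and takes essentially the same approach as the paper: both use Lemma~\ref{lem:d1shortest} to obtain the up--horizontal--down decomposition of a $d_1$-path and then reduce to the same-level bound of Lemma~\ref{lem:ineqsamerow}. The only cosmetic difference is where the reduction is applied --- the paper lifts the lower of $p,q$ to $\level(q)$ and uses $d_2(p,q)=i+d_2(r,q)$, whereas you apply the lemma at the apex pair $\tilde p,\tilde q$ and explicitly exhibit a candidate path with at most one horizontal move --- but the argument is the same in substance.
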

\begin{proof}
	We have $d_1(p,q) \leq d_2(p,q)$ because $d_1(p,q)$ is the length of a shortest path
	from $p$ to $q$ in $\discrete^D$, and $d_2(p,q)$ is the length of some path from $p$ to $q$.
	We now prove the other side of the inequality. 

	If $p$ and $q$ are at the same level, then the result is given by Lemma~\ref{lem:ineqsamerow}.
	So we may assume that $p$ and $q$ are not at the same level. Without loss of generality,
	we assume that the $\level(p) \leq \level(q)$. Let $r$ be the ancestor of
	$p$ that is at the same level as $q$, and let $i=\level(r)-\level(p)$.
	
	By definition, we have $d_2(p,q)=i+d_2(r,q)$. By Lemma~\ref{lem:d1shortest}, there is
	a shortest path from $p$ to $q$ going through $r$, and hence
	$d_1(p,q)=i+d_1(r,q)$. By Lemma~\ref{lem:ineqsamerow}, we have $d_2(r,q) \leq d_1(r,q)+2$.
	It follows that $d_2(p,q) \leq d_1(p,q)+2$.
\end{proof}

The bound in Theorem~\ref{th:discmain} is tight: In \figurename~\ref{fig:moves}b, 
we have $d_2(p,q)=d_1(p,q)+2=5$. Finally, we give a property of $d_2$-paths that will be needed
later.

Let $p,q \in \discrete^D$ be such that $p$ is neither a parent nor a descendant of $q$, and
$p\neq q$. Then the $d_2$-path $\rho$ from $p$ to $q$ contains exactly one horizontal move,
from a point $\bar p$ to $\bar q$. We call the edge $\bar p \bar q$ the {\it bridge} of $\rho$.
For instance, in \figurename~\ref{fig:moves}b, the bridge is the top edge of the blue path.
The {\it level} of the bridge is $\level(\bar p)=\level(\bar q)$, which we denote $\level(p,q)$.

When $p=q$, or $p$ is a descendant of $q$, we let $\level(p,q)=\level(q,p)=\level(q)$.
The lemma below allows us to approximate the level of a bridge.

\begin{lemma}\label{lem:bridge}
	Let $p$ and $q$ be two distinct points in $\discrete^D$ such that $p$ is neither a parent 
	nor a descendant of $q$. 	Let $\ell=\lfloor \log_2  \|x(p)x(q)\|_{\infty} \rfloor$.
	Then $\level(p,q)-1 \leq  \ell \leq \level(p,q)$.
\end{lemma}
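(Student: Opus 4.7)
The plan is to set $k:=\level(p,q)$ and sandwich $\|x(p)x(q)\|_{\infty}$ between $2^{k-1}$ and $2^{k+1}$. Let $\bar p,\bar q$ be the two endpoints of the bridge: $\bar p$ is the level-$k$ ancestor of $p$ (with $\bar p=p$ when $\level(p)=k$), $\bar q$ is defined analogously for $q$, and $\bar p,\bar q$ are horizontal neighbors at level $k$, so that $\|x(\bar p)x(\bar q)\|_{\infty}=2^k$. Note also that, since the bridge must be at a level at least $\max(\level(p),\level(q))$, we automatically have $\level(p),\level(q)\leq k$.

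For the upper bound, the cell of $p$ is contained in the cell of $\bar p$, so each horizontal coordinate of $x(p)$ lies inside the horizontal projection of $\bar p$'s cell, giving $\|x(p)x(\bar p)\|_{\infty}\leq 2^{k-1}-2^{\level(p)-1}<2^{k-1}$, and similarly for $q$ and $\bar q$. The triangle inequality then yields $\|x(p)x(q)\|_{\infty}<2^{k-1}+2^k+2^{k-1}=2^{k+1}$, whence $\ell\leq k$.

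For the lower bound I would exploit that $k$ is the \emph{lowest} level at which ancestors of $p$ and $q$ are horizontal neighbors. The clean case is $\level(p)\leq k-1$ and $\level(q)\leq k-1$: the ancestors $\bar p',\bar q'$ of $p,q$ at level $k-1$ exist, they are distinct (otherwise they would share a common parent, forcing $\bar p=\bar q$), and they are not horizontal neighbors (otherwise the bridge would be at level $k-1$). Hence their grid indices must differ by at least $2$ in some horizontal coordinate, yielding $\|x(\bar p')x(\bar q')\|_{\infty}\geq 2\cdot 2^{k-1}=2^k$; combined with the containment bounds $\|x(p)x(\bar p')\|_{\infty},\|x(q)x(\bar q')\|_{\infty}<2^{k-2}$ (same argument as in the upper-bound step, one level lower), the reverse triangle inequality gives $\|x(p)x(q)\|_{\infty}>2^{k-1}$.

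The main obstacle is the remaining case, in which some endpoint, say $p$, satisfies $\level(p)=k$, so that $p=\bar p$ has no ancestor at level $k-1$ to argue with. I would handle it directly: if also $\level(q)=k$, then $p=\bar p$ and $q=\bar q$ are already horizontal neighbors at level $k$, so $\|x(p)x(q)\|_{\infty}=2^k$; otherwise $\level(q)<k$, and I combine $\|x(p)x(\bar q)\|_{\infty}=2^k$ with $\|x(q)x(\bar q)\|_{\infty}<2^{k-1}$ via the reverse triangle inequality to get $\|x(p)x(q)\|_{\infty}>2^{k-1}$. In every case $2^{k-1}<\|x(p)x(q)\|_{\infty}<2^{k+1}$, so $\ell\in\{k-1,k\}$, which is exactly $\level(p,q)-1\leq\ell\leq\level(p,q)$.
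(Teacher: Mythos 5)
Your proof is correct and follows essentially the same route as the paper's: the upper bound comes from the fact that $x(p)$ and $x(q)$ both lie in the projections of the two adjacent level-$\level(p,q)$ bridge cells, and the lower bound comes from the fact that the level-$(\level(p,q)-1)$ ancestors of $p$ and $q$ (the paper calls them the children of the bridge endpoints along the $d_2$-path, which is the same thing) are distinct and not horizontal neighbors. The only cosmetic differences are that the paper fixes $\level(p)\leq\level(q)$ up front so it only needs two cases, and it states the containment bounds geometrically rather than via explicit triangle-inequality algebra; these are presentational choices, not a different argument.
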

\begin{proof}
	Without loss of generality, we assume that $\level(p) \leq \level(q)$.
	Let $i=\level(p,q)$, and let $\bar p \bar q$ be the bridge of the $d_2$-path from $p$ to $q$.
	As $\cell(\bar p)$ and $\cell(\bar q)$ are horizontal neighbors at level $i$, and $x(p)$ and $x(q)$
	lie in the vertical projection of these cells onto the hyperplane $z=0$, we have
	$\|x(p)x(q)\|_\infty < 2^{i+1}$, and thus $\ell \leq i$.
	
	Suppose that $\level(q)=i$, and hence $q=\bar q$. As $x(p)$ is not in the projection
	of $\cell(q)$ onto the hyperplane $z=0$, we have  $\|x(p)x(q)\|_\infty \geq 2^{i-1}$.
	It follows that $\ell \geq i-1$.
	
	The remaining case is when $\level(p) \leq \level(q) \leq i-1$. 
	Let $p'$ be the first
	point after the bridge vertex $\bar p$ on the $d_2$-path from $\bar p$ to $p$, 
	hence $p'$ is a child
	of $\bar p$. We define $q'$ in the same way. The points $x(p)$ and $x(q)$  are in the 
	vertical projection of the cells $\cell(p')$ and $\cell(q')$ onto the hyperplane $z=0$.
	As $p'$ and $q'$ are at level $i-1$ and are not neighbors, it follows that
	$\|x(p)x(q)\|_\infty \geq 2^{i-1}$, and thus $\ell \geq i-1$.
\end{proof}

%%%%%%%%%%%%%%%%%%%%%%%%%%%%%%%%%%%%%%%%%%%%%%%%%%%%%%%%%%%%%%%%%%%%%%%%%%%%%%%%%%%%%%%%%%%%%%%%%

\section{Embedding $\poincare^D$ into the discrete models} \label{sec:embedding}

In this section, we give an embedding of the 
Poincar\'e half-space $\poincare^D$ into the discrete models $(\discrete^D,d_1)$ and $(\discrete^D,d_2)$ with additive distortion $O(\log D)$. 
Our embedding maps any point $p \in \poincare^D$ to the center $b(\cell)$ of the cell $\cell$ of
$\poincare^D$ that contains $p$.
If several cells contain $p$, we take $b(p)$ to be the center of a cell at the maximum level.

We first observe that for all $x\geq 1$, 
	\begin{equation}\label{eq:UBarsin}
			\arsinh(x)=\ln(x+\sqrt{x^2+1})\leq \ln(x+\sqrt{2x^2}) \leq \ln(x)+\ln(1+\sqrt{2}) < \ln(x)+1.
	\end{equation}
We can then bound the hyperbolic distance between $p$ and $b(p)$.

\begin{lemma}\label{lem:embedding1}
For any $p \in \poincare^D$, we have $d_{H} (p,b(p)) < \ln D$.
\end{lemma}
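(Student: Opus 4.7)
The approach is to bound $d_H(p, b(p))$ directly from the hyperbolic distance formula after controlling the Euclidean numerator $\|p - b(p)\|$ and the vertical denominator $\sqrt{z(p)\,z(b(p))}$ geometrically. Let $\cell$ be the level-$i$ cell with $b(\cell) = b(p)$. Since $p \in \cell$ and $\cell$ is an axis-aligned hypercube of side $2^i$ centered at $b(\cell)$, every coordinate of $p - b(\cell)$ has absolute value at most $2^{i-1}$, giving $\|p - b(p)\| \leq \sqrt{D}\cdot 2^{i-1}$. Combined with $z(p) \geq 2^i$ and $z(b(p)) = 3\cdot 2^{i-1}$, this yields $\sqrt{z(p)\,z(b(p))} \geq \sqrt{6}\cdot 2^{i-1}$, so substituting into the formula for $d_H$ produces
\[
d_H(p,b(p)) \;\leq\; 2\arsinh\!\bigl(\sqrt{D/24}\bigr).
\]

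It then remains to show that $2\arsinh(\sqrt{D/24}) < \ln D$ for every $D \geq 2$. For $D \geq 24$, the argument is at least $1$, so inequality~\eqref{eq:UBarsin} applies and gives $2\arsinh(\sqrt{D/24}) < \ln(D/24) + 2 = \ln D - \ln 24 + 2 < \ln D$, because $\ln 24 > 2$. For the remaining small dimensions $2 \leq D \leq 23$, I would instead invoke the elementary estimate $\arsinh(x) \leq x$ (which follows from concavity of $\arsinh$ on $[0,\infty)$) to reduce the bound to $\sqrt{D/6}$, and then observe by a short monotonicity argument that $f(D) = \ln D - \sqrt{D/6}$ is increasing on $[1,24]$ (its derivative is positive there) and positive at $D = 2$.

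\textbf{Main obstacle.} None of the individual steps is deep; the only awkwardness is the case split around $D = 24$, which simply reflects the transition between the linear and logarithmic regimes of $\arsinh$. A fully unified finish would be to use the identity $2\arsinh(x) = \ln\bigl(2x^2 + 1 + 2x\sqrt{x^2+1}\bigr)$ with $x = \sqrt{D/24}$, reducing the goal to the quadratic inequality $5D^2 - 12D + 6 > 0$, which holds for all $D \geq 2$; however, keeping the case split stays closer in spirit to the bound~\eqref{eq:UBarsin} that has already been set up.
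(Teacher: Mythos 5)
Your proof is correct and follows essentially the same route as the paper: bound $\|p-b(p)\|$ by $\sqrt{D}\,2^{i-1}$, lower-bound the vertical term, plug into the $\arsinh$ formula, and finish with the logarithmic estimate~\eqref{eq:UBarsin}. Two differences are worth noting. First, you use $z(b(p))=3\cdot 2^{i-1}$ exactly rather than the looser $z(b(p))\geq 2^i$, which gives the tighter argument $\sqrt{D/24}$ in place of the paper's $\sqrt{D}/4$. Second, and more importantly, you correctly observe that~\eqref{eq:UBarsin} only applies when its argument is at least $1$, and so you supply a separate argument (via $\arsinh(x)\leq x$ and monotonicity of $\ln D-\sqrt{D/6}$) for small $D$. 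The paper's own proof applies~\eqref{eq:UBarsin} to $\sqrt{D}/4$ without checking that $D\geq 16$; for $D<16$ the inequality $\sqrt{x^2+1}\leq\sqrt{2x^2}$ fails, so strictly speaking the paper's proof has a gap there that your version repairs. Your unified alternative via $2\arsinh(x)=\ln(2x^2+1+2x\sqrt{x^2+1})$ and the quadratic $5D^2-12D+6>0$ is also correct and avoids the case split entirely; either finish is fine.
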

\begin{proof}
	Let $i=\level(p)$.
	We have $\|pb(p)\| \leq 2^{i-1}\sqrt{D}$, $z(p) \geq 2^i$ and $z(b(p)) \geq 2^i$. Therefore
	\begin{align*}
		d_H(p,b(p)) & =2\arsinh\left(\frac 1 2 \frac{\|pb(p)\|}{ \sqrt{z(p)z(b(p))} }\right) \\
		& \leq 2\arsinh\left(\frac 1 2 \frac{2^{i-1}\sqrt D}{ 2^i }\right)=2\arsinh\left(\frac{\sqrt{D}}{4}\right)
	\end{align*}
	so by Inequality~\eqref{eq:UBarsin},
	\[
		d_H(p,b(p)) \leq 2 \ln(\sqrt D/4)+2=\ln(D)+2-4\ln 2 < \ln D.
	\]
\end{proof}

Let $p,q \in \poincare^D$ and let $h$ be the highest point on the geodesic from $p$ to $q$. 
Let $p'$ and $q'$ be the points vertically above $p$  and $q$, respectively, 
that are on the horizontal line through $h$.
(See \figurename~\ref{fig:semicircle}a and ~\ref{fig:semicircle}b.)

\begin{figure}
	\centering
	\includegraphics[width=\textwidth]{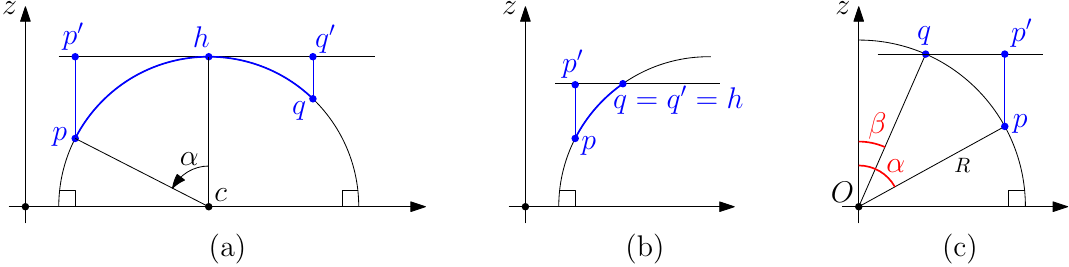}
	\caption{Lemma~\ref{lem:semicircle}\label{fig:semicircle}}
\end{figure}

\begin{lemma}\label{lem:semicircle}
For any $p,q \in \poincare^D$, we have
$0 \leq d_H(p,q)-d_H(p,p')-d_H(q,q') < 2$.
\end{lemma}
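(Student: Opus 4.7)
I first dispose of the degenerate case $x(p)=x(q)$. Then the geodesic is the vertical segment from $p$ to $q$ and $h$ is the higher of the two endpoints; without loss of generality $h=q$, so $p'$ is the point vertically above $p$ at height $z(q)$, which coincides with $q$ itself (since $x(p)=x(q)$), while $q'=q$. Thus $d_H(p,p')+d_H(q,q')=d_H(p,q)+0$, and both inequalities hold (the left with equality). For the remainder, assume $x(p)\neq x(q)$, so the geodesic from $p$ to $q$ lies on a Euclidean semicircle of some center $(c,0)$ and radius $r$, with $z(p),z(q)\leq r$.

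For the lower bound $d_H(p,q)\geq d_H(p,p')+d_H(q,q')$, I observe that $h$ always lies on the geodesic between $p$ and $q$: either $h$ is the peak $(c,r)$ of the semicircle (when $p$ and $q$ are on opposite sides of $c$), or $h$ is whichever of $p,q$ is higher (when they are on the same side). Hence $d_H(p,q)=d_H(p,h)+d_H(h,q)$. I then invoke the general fact that $d_H(a,b)\geq |\ln(z(b)/z(a))|$ for any $a,b\in\poincare^D$, which follows by bounding the hyperbolic arclength integrand $\sqrt{|x'|^2+(z')^2}/z$ below by $|z'|/z$ along any path and integrating. Applied to the pairs $(p,h)$ and $(q,h)$, using $z(p')=z(q')=z(h)$, this gives $d_H(p,h)\geq d_H(p,p')$ and $d_H(q,h)\geq d_H(q,q')$, which sum to the desired inequality.

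For the upper bound, it suffices to show $d_H(p,h)-d_H(p,p')\leq \ln 2$ and symmetrically for $q$, since $2\ln 2<2$. I parametrize the semicircle by angle $\theta\in(0,\pi)$ so that a point on it has height $r\sin\theta$ and the hyperbolic arclength element is $d\theta/\sin\theta$, with antiderivative $\ln\tan(\theta/2)$. Write $z(p)=r\sin\alpha_p$ for some $\alpha_p\in(0,\pi/2]$, reflecting the picture if necessary so that $p$ lies on the side where $\theta\leq\pi/2$. In the subcase where $h$ is the peak (so $z(h)=r$), direct integration gives $d_H(p,h)=-\ln\tan(\alpha_p/2)$ while $d_H(p,p')=\ln(r/z(p))=-\ln\sin\alpha_p$; using the half-angle identity $\sin\alpha=\tan(\alpha/2)(1+\cos\alpha)$, the difference simplifies to $\ln(1+\cos\alpha_p)\in[0,\ln 2]$. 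In the other subcase, $h$ equals the higher endpoint, say $h=q$ with $z(p)\leq z(q)\leq r$; then $d_H(q,q')=0$ (trivially $\leq \ln 2$) and an analogous calculation gives $d_H(p,q)-d_H(p,p')=\ln\bigl((1+\cos\alpha_p)/(1+\cos\alpha_q)\bigr)$, which lies in $[0,\ln 2]$ since $\alpha_p\leq\alpha_q\in(0,\pi/2]$ implies $1\leq 1+\cos\alpha_q\leq 1+\cos\alpha_p\leq 2$.

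The main obstacle is the casework on whether the peak of the semicircle lies on the geodesic arc from $p$ to $q$, and correctly identifying $h$ in each case; once the split is settled, the algebra reduces to a routine application of the half-angle identity, and the fact that both individual surplus terms are at most $\ln 2$ is what forces the additive constant $2$ (rather than something smaller) in the stated bound.
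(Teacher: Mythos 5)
Your proof is correct. It shares the paper's high-level structure---split the geodesic at its peak $h$ and bound the ``surplus'' $d_H(p,h)-d_H(p,p')$ on each side separately---but the core computation is genuinely different. The paper applies the triangle inequality to get $d_H(p,h)-d_H(p,p')\leq d_H(p',h)$, then bounds $d_H(p',h)$ using the closed-form $\arsinh$ distance formula together with the trigonometric estimate $(1-\sin\beta)/\cos\beta \leq 1$, obtaining a per-side bound of $2\arsinh(1/2)<1$. You instead parametrize the geodesic semicircle by angle and integrate the hyperbolic arclength element $d\theta/\sin\theta$ exactly, yielding the closed form $d_H(p,h)-d_H(p,p')=\ln(1+\cos\alpha_p)$; this gives the tighter per-side bound $\ln 2 \approx 0.693$ (so a total surplus $\leq 2\ln 2\approx 1.386$, versus the paper's implicit $4\arsinh(1/2)\approx 1.925$). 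Your version also has the merit of proving the lower bound explicitly, via $d_H(a,b)\geq|\ln(z(b)/z(a))|$, and of disposing of the degenerate collinear case---both of which the paper leaves implicit. One tiny slip: your closing remark that the per-side bound $\ln 2$ ``forces the additive constant $2$'' is not quite right, since your own argument in fact proves the stronger bound $<2\ln 2$; the lemma's constant $2$ is simply a convenient round number, not forced by the analysis.
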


\begin{proof}
We first consider the special case where the geodesic from $p$ to $q$ is an arc of a circle
of radius $R$ 
centered at the origin $O$, and the geodesic from $p$ to $q$ does not cross the $z$-axis. Let $\alpha$ 
and $\beta$ be the 
angles that $Op$ and $Oq$ make with the $z$-axis, respectively.
Without loss of generality, we assume that  $\alpha>\beta$. (See \figurename~\ref{fig:semicircle}c.) 

By the triangle inequality, we have
\begin{align*}
d_H(p,q)-d_H(p,p') & \leq d_H(p',q) 
= 2\arsinh\left(\frac 12 \sqrt{\frac{R^2(\sin \alpha-\sin \beta)^2} 
{R^2\cos^2\beta}}\right) \\
& \leq 2\arsinh\left(\frac 12 \sqrt{\frac{(1-\sin \beta)^2} 
{\cos^2\beta}}\right)   \\
&  \leq  2\arsinh\left(\frac 12 \sqrt{\frac{1-\sin^2 \beta} 
{\cos^2\beta}}\right) \\
& = 2\arsinh(1/2)  < 1. \\
\end{align*}

The inequality above applies whenever $q$ is the highest point on the
geodesic arc $pq$, as in \figurename~\ref{fig:semicircle}b.
When the geodesic arc $pq$ has a local maximum $h$ in its interior
(See \figurename~\ref{fig:semicircle}a), 
we apply the inequality above twice through the point $h$,
 which yields the result. (See \figurename~\ref{fig:semicircle}a.)
\end{proof}

Still using the same notation as in \figurename~\ref{fig:semicircle}, 
we have the following two observations.
\begin{lemma}\label{lem:embedding0}
	Let $\hat p=b(p')$ and $\hat q=b(q')$. Then $d_1(\hat p,\hat q) \leq 5$.
\end{lemma}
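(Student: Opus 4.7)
The plan is to show that $\hat p$ and $\hat q$ lie at the same level of the binary tiling, to bound their horizontal distance $\lambda(\hat p,\hat q)$ by $5$, and then to invoke Lemma~\ref{lem:d1samerow} to deduce the bound on $d_1$.

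First I would observe that $p'$ and $q'$ share the $z$-coordinate $z(h)$, so the cells containing them lie at the same level $i$, where $2^i \leq z(h) \leq 2^{i+1}$; consequently $\hat p$ and $\hat q$ are at the same level $i$. Next, since the geodesic from $p$ to $q$ is an arc of a semicircle orthogonal to the hyperplane $z=0$ whose highest point has height $z(h)$, its radius equals $z(h)$, and hence the Euclidean distance between the projections of $p$ and $q$ onto that hyperplane is at most the diameter: $\|x(p)-x(q)\|_2 \leq 2z(h) \leq 2^{i+2}$. Since $p'$ lies in the cell of width $2^i$ centered at $\hat p$ and $x(p')=x(p)$, we have $\|x(\hat p)-x(p)\|_\infty \leq 2^{i-1}$, and the analogous bound holds for $\hat q$ and $q$. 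Combining these through the triangle inequality yields
\[
\|x(\hat p)-x(\hat q)\|_\infty \;\leq\; 2^{i-1}+2^{i+2}+2^{i-1} \;=\; 5\cdot 2^i,
\]
which rearranges to $\lambda(\hat p,\hat q) \leq 5$.

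Finally I would invoke Lemma~\ref{lem:d1samerow}. When $\lambda(\hat p,\hat q) \leq 4$ the lemma gives $d_1(\hat p,\hat q)=\lambda(\hat p,\hat q) \leq 4$. When $\lambda(\hat p,\hat q)=5$, it gives $d_1(\hat p,\hat q)=2+d_1(\hat p_1,\hat q_1)$, where $\hat p_1,\hat q_1$ are the parents of $\hat p,\hat q$; by Lemma~\ref{p:lambda}, $\lambda(\hat p_1,\hat q_1) \leq (1+5)/2 = 3$, so a second application of Lemma~\ref{lem:d1samerow} yields $d_1(\hat p_1,\hat q_1)=\lambda(\hat p_1,\hat q_1) \leq 3$, whence $d_1(\hat p,\hat q) \leq 5$. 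I do not expect any real obstacle; the only mild subtlety is this two-step case analysis, forced by the fact that the horizontal bound $5\cdot 2^i$ sits exactly on the boundary between the two regimes of Lemma~\ref{lem:d1samerow}.
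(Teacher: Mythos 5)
Your argument has a genuine gap at the step ``since the geodesic from $p$ to $q$ is an arc of a semicircle $\ldots$ whose highest point has height $z(h)$, its radius equals $z(h)$.'' This is only correct when the apex of the semicircle lies strictly between $p$ and $q$ on the arc (the situation of \figurename~\ref{fig:semicircle}a). When the highest point of the geodesic arc is an endpoint, say $h=q$ (the situation of \figurename~\ref{fig:semicircle}c), the topmost point of the full semicircle is not on the arc at all, and the radius is $R = z(h)/\cos\beta > z(h)$, where $\beta$ is the angle that $Oq$ makes with the $z$-axis. The paper's own proof treats this as a separate case precisely because the ``radius $= z(h)$'' identification fails there. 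Your downstream conclusion $\|x(p)-x(q)\|_2 \leq 2z(h)$ does in fact hold in the endpoint case --- one has $\|x(p)-x(q)\| = R(\sin\alpha - \sin\beta) \leq R(1-\sin\beta) \leq R\cos^2\beta \leq R\cos\beta = z(h)$ --- but that requires a separate computation along the lines of the paper's second case, not the one you gave.

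Once that step is patched, the rest of your argument is sound, and it does tidy things up relative to the paper: you reduce both cases to the single inequality $\|x(p)-x(q)\|_\infty \leq 2z(h)$ and then run one uniform triangle-inequality computation, whereas the paper keeps the two cases fully separate through to the end. Your final two-step invocation of Lemma~\ref{lem:d1samerow} and Lemma~\ref{p:lambda} is correct but more than is needed: $d_1(\hat p,\hat q)\leq\lambda(\hat p,\hat q)$ holds trivially (one can always travel purely horizontally), so $\lambda(\hat p,\hat q)\leq 5$ already gives $d_1(\hat p,\hat q)\leq 5$, which is evidently the one-line step the paper has in mind.
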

\begin{proof}
	Let $R$ be the radius of the circle containing the geodesic arc $pq$.
	Suppose that $p'q'$ is tangent to this geodesic arc at $h$. (See \figurename~\ref{fig:semicircle}a.)
	Then $z(h)=R$, and the cells intersected by $p'q'$ are at a level $i$ such that 
	$2^i \leq R < 2^{i+1}$. So we have $i=\level(\hat p)=\level(\hat q)=\lfloor \log_2 R \rfloor$.
	Since $\|p'q'\| \leq 2R$, we have $\|x(\hat p) x(\hat q)\|_\infty \leq 2R+2^i$, 
	and since the width of the cells 	at level $i$ is at least $R/2$,  the horizontal
	distance $\lambda(\hat p,\hat q)$ is at most $5$. It follows that $d_1(\hat p, \hat q) \leq 5$.
	
	Now suppose that $h=q$ and the center of the circle containing the arc $pq$ is at the origin $O$. 
	(See \figurename~\ref{fig:semicircle}c.) 
	Let $\alpha$ and $\beta$ be the angles that $Op$ and $Oq$, respectively, make with the $z$-axis.
	Then the Euclidean length of 
	$p'q'=p'q$ is $R(\sin \alpha-\sin \beta)$ and the cells intersected by $p'q'$ have width
	at least $R \cos(\beta)/2$. Since
	\[ \frac{R(\sin \alpha-\sin \beta)}{R(\cos \beta)/2}
		\leq 2\frac{1-\sin \beta}{\cos \beta} 
		\leq 2\frac{1-\sin^2 \beta}{\cos^2 \beta}=2, \]
	by the same argument as above, $d_1(\hat p, \hat q) \leq 3$.
\end{proof}

The lemma below gives us a lower bound on the level of the bridge $\bar p \bar q$ when
$p'q'$ is tangent to the geodesic arc $pq$, as in \figurename~\ref{fig:semicircle}a.

\begin{lemma}\label{lem:bridge_circle}
	Suppose that $p$ and $q$ are two distinct points in $\discrete^D$ such that $p$ is neither 
	a parent nor a descendant of $q$, and 
	$h$ is between $p$ and $q$ along the geodesic arc $pq$. 
	Then we have $\level(p,q) \geq \log_2(z(h))-\log_2(D)/2-1$.
\end{lemma}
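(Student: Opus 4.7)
The plan is to perform a case analysis on $\max(\level(p),\level(q))$. Set $R=z(h)$ and the threshold $T=\log_2 R-(\log_2 D)/2-1$. The starting observation is that the bridge $\bar p\bar q$ of the $d_2$-path from $p$ to $q$ lies at level $\level(p,q)$, and $\bar p$ (resp.\ $\bar q$) is an ancestor of $p$ (resp.\ $q$), so $\level(p,q)\geq \max(\level(p),\level(q))$. Consequently, in the easy case $\max(\level(p),\level(q))\geq T$, the desired inequality is immediate. I then focus on the remaining case, where both $\level(p),\level(q)<T$, which translates to the height bound $z(p),z(q)\leq 3\cdot 2^{T-1}=3R/(4\sqrt D)$.

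In this remaining case, I would pass to the unique $2$-dimensional affine plane spanned by the vertical direction and the line through $x(p)$ and $x(q)$ in $\IR^{D-1}$; this plane contains the entire geodesic arc. Inside this plane the arc lies on a circle of radius $R$ centered at the vertical projection of $h$ onto the hyperplane $z=0$, so (using the planar horizontal coordinate) $|x(p)-x(h)|=\sqrt{R^2-z(p)^2}$ and similarly for $q$. Since $h$ is strictly between $p$ and $q$ along the arc, the two offsets have opposite signs, and therefore $\|x(p)-x(q)\|_2=\sqrt{R^2-z(p)^2}+\sqrt{R^2-z(q)^2}\geq 2R\sqrt{1-9/(16D)}$. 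Converting to $L^\infty$ via $\|\cdot\|_\infty\geq \|\cdot\|_2/\sqrt{D-1}$ and checking (after squaring, the inequality reduces to $3D\geq 5/4$) that the resulting bound is at least $R/\sqrt D$, I invoke Lemma~\ref{lem:bridge} to obtain $\level(p,q)\geq \lfloor\log_2\|x(p)-x(q)\|_\infty\rfloor\geq \log_2(R/\sqrt D)-1=T$.

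The main obstacle is identifying the threshold $T$ so that the two cases combine to give exactly the claimed bound, and carefully tracking the relation between the Euclidean distance in the $2$-plane of the geodesic, the Euclidean distance in $\IR^{D-1}$, and the $L^\infty$ distance fed into Lemma~\ref{lem:bridge}. The remainder is routine plane geometry of a circular arc, together with the observation that the discretization forces $z(p)$ and $z(q)$ to be bounded away from $R$ whenever $\level(p)$ and $\level(q)$ are small.
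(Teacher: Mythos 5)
Your proof is correct, and the two cases indeed combine to give exactly the stated bound. The route is in the same spirit as the paper's, but the case split is parameterized differently. The paper splits on the angle $\alpha$ between $cp$ and $ch$ at the fixed threshold $\pi/3$ (equivalently, on whether $z(p)\geq z(h)/2$), bounding $\level(p,q)$ by $\level(p)$ when $\alpha\leq\pi/3$, and otherwise using only the single offset $|x(p)-x(h)|\geq\sqrt{3}\,z(h)/2$ before invoking Lemma~\ref{lem:bridge}. You instead split on whether $\max(\level(p),\level(q))$ meets the target value $T$ itself, which makes the easy case trivial, and in the hard case you exploit that \emph{both} endpoints are low, so you can add the two offsets $\sqrt{R^2-z(p)^2}+\sqrt{R^2-z(q)^2}$ rather than rely on one of them; this plus the conversion $\|\cdot\|_\infty\geq\|\cdot\|_2/\sqrt{D-1}$ and the floor slack in Lemma~\ref{lem:bridge} lands precisely on $T$ after the clean reduction to $3D\geq 5/4$. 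One small point worth making explicit: your Case~2 silently uses that $h$ is \emph{strictly} between $p$ and $q$ so the two planar offsets have opposite signs; this is automatic there since $z(p),z(q)<3R/(4\sqrt D)<R=z(h)$, and if $h$ coincides with $p$ or $q$ then that endpoint has level $\geq\log_2 R-\log_2(3/2)\geq T$, so Case~1 applies anyway.
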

\begin{proof}
	The geodesic semi-circle through $p$ and $q$ has radius $z(h)$ and center $c=(x(h),0)$.
	Let $\alpha$ be the angle between the segment $cp$ and the segment $ch$.
	
	If $\alpha \leq \pi/3$, then $z(p) \geq z(h)/2$. Since
	$\level(p,q) \geq \level(p)$ and $z(p)=(3/2)2^{\level(p)}$, it follows that
	\[
		\level(p,q) \geq \level(p)= \log_2(z(p))-\log_2(3)+1 \geq \log_2(z(h))-\log_2(3).
	\]
	
	If $\alpha \geq \pi/3$, then $\|x(p) x(q)\| \geq \sqrt 3 z(h)/2$, so
	$$\|x(p)x(q)\|_\infty \geq \sqrt{3}z(h)/(2\sqrt{D}),$$ and thus by Lemma~\ref{lem:bridge},
	$$\level(p,q) \geq \log_2(\sqrt{3}z(h)/(2\sqrt{D})) \geq \log_2(z(h))-\log_2(D)/2-1.$$
\end{proof}

We now prove that $d_H$, $d_1$ and $d_2$ differ by at most $O(\log D)$ over $\discrete^D$.
We begin with two special cases, and then we handle the general case.

\begin{lemma}\label{lem:ancestor_circle}
	For any $p,q \in \discrete^D$ such that $q$ is an ancestor of $p$, 
	we have $$\ln(2) \cdot d_1(p,q) \leq d_H(p,q) \leq \ln(2) \cdot d_1(p,q)+\ln(D)+2+\ln 4.$$
\end{lemma}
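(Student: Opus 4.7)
The plan is to bound $d_H(p,q)$ both below and above by separating the vertical displacement along $z$ from any horizontal wobble. By Lemma~\ref{lem:d1shortest}, the hypothesis that $q$ is an ancestor of $p$ makes the shortest $d_1$-path purely upward, so $k:=d_1(p,q)=\level(q)-\level(p)$; and since $p$ and $q$ are cell centers, $z(q)/z(p)=2^k$.

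For the lower bound I will use the inequality $ds\geq |dz|/z$ obtained by dropping the $dx$ term in $ds^2=(dx^2+dz^2)/z^2$. Integrating along the geodesic from $p$ to $q$ gives $d_H(p,q)\geq \ln(z(q)/z(p))=k\ln 2$, which is exactly the left-hand inequality.

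For the upper bound I will introduce the auxiliary point $p^*\in\poincare^D$ with $x(p^*)=x(p)$ and $z(p^*)=z(q)$, and apply the triangle inequality $d_H(p,q)\leq d_H(p,p^*)+d_H(p^*,q)$. The vertical leg contributes $d_H(p,p^*)=\ln(z(q)/z(p))=k\ln 2$. For the horizontal leg I will check that $p^*$ lies strictly inside $\cell(q)$: its $z$-coordinate $(3/2)2^{\level(q)}$ is the midpoint of the $z$-range of level-$\level(q)$ cells, and the center $x(p)$ of any descendant of $\cell(q)$ sits at positive distance from the boundary of $\cell(q)$'s $x$-projection. Hence $b(p^*)=q$, and Lemma~\ref{lem:embedding1} applied to $p^*$ yields $d_H(p^*,q)<\ln D$. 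Summing the two legs gives $d_H(p,q)<k\ln 2+\ln D$, which is (considerably) tighter than the claimed $\ln(2)\cdot d_1(p,q)+\ln D+2+\ln 4$.

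The only delicate step is the bookkeeping needed to guarantee $b(p^*)=q$ under the ``maximum level'' tiebreaking rule in the definition of $b(\cdot)$, namely verifying that $p^*$ is strictly inside $\cell(q)$ and not on any facet shared with a neighboring cell; this reduces to the standard fact that subdivision of a cell always places descendant centers strictly inside the parent. Should that argument prove awkward, one can sidestep Lemma~\ref{lem:embedding1} and bound $d_H(p^*,q)$ directly from the $\arsinh$ formula using $\|x(p)-x(q)\|_\infty\leq 2^{\level(q)-1}$ together with inequality~\eqref{eq:UBarsin}; the generous additive slack $2+\ln 4$ in the statement easily absorbs the resulting constants.
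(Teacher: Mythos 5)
Your proof is correct, and it takes a genuinely different (and cleaner) route than the paper's. For the lower bound, you integrate $ds\geq|dz|/z$ along the geodesic, whereas the paper introduces the auxiliary point $r=(x(q),z(p))$ and uses the monotonicity of the $\arsinh$ formula in $\|pq\|$ to get $d_H(p,q)\geq d_H(r,q)=\ln(z(q)/z(p))$; your integral argument makes the monotonicity explicit rather than leaving it implicit. For the upper bound, the paper does a single direct computation: it bounds $\|pq\|\leq 2^{j+1}\sqrt D$, plugs into the $\arsinh$ formula, and finishes with inequality~\eqref{eq:UBarsin}, obtaining the stated $(j-i)\ln 2+\ln D+2+\ln 4$. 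You instead split off the purely vertical leg $pp^*$ with $p^*=(x(p),z(q))$, which contributes exactly $(j-i)\ln 2$, and handle the horizontal residue by checking $b(p^*)=q$ and invoking Lemma~\ref{lem:embedding1}. This buys you modularity (no repeated $\arsinh$ bookkeeping) and a strictly tighter constant ($\ln D$ in place of $\ln D+2+\ln 4$), at the small cost of the interiority argument for $p^*$. That argument is sound: $z(p^*)=(3/2)2^{\level(q)}$ is strictly interior to $[2^{\level(q)},2^{\level(q)+1}]$, and because $p$ is the center of a descendant box of $\cell_x(q)$, $x(p)$ has the form $(k+\tfrac12)2^{\level(p)}$ coordinatewise and so lies strictly inside $\cell_x(q)$; since the $z$-ranges of distinct levels have disjoint interiors, $\cell(q)$ is the unique cell containing $p^*$ and the tie-breaking rule is moot. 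The only nit is that Lemma~\ref{lem:d1shortest} does not by itself force the $d_1$-path to be purely upward when $q$ is an ancestor of $p$; the cleaner justification for $d_1(p,q)=j-i$ is simply that every move changes the level by at most one, so $d_1(p,q)\geq j-i$, and the straight-up path achieves it. This is the same identity the paper uses.
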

\begin{proof}
	Let $i=\level(p)$ and $j=\level(q)$. Let $r=(x(q),z(p))$.
	Then we have
	\[
		d_H(p,q) \geq d_H(r,q)=\ln(z(q)/z(p))=\ln(2^{j-i})=(j-i)\ln 2.
	\]
	and thus
	\begin{equation}\label{eq:lem:ancestor_circle}
		d_H(p,q) \geq (j-i) \ln 2.
	\end{equation}
	We now prove an upper bound on $d_H(p,q)$. Observe that $\|pq\| \leq 2^{j+1}\sqrt{D}$.
	\begin{align*}
		d_H(p,q)  & = 2\arsinh\left(\frac 1 2 \frac{\|pq\|}{\sqrt{z(p)z(q)}} \right) \\
					 & \leq 2\arsinh\left(\frac{2^{j+1}\sqrt D}{3\cdot 2^{(i+j)/2}} \right)
					 \leq 2\arsinh\left({2^{(j-i+2)/2}\sqrt D} \right)
	\end{align*}
	Then it follows from Inequality~\ref{eq:UBarsin} that
	\begin{align*}
		d_H(p,q) & \leq 2 \ln\left(2^{(j-i+2)/2}\sqrt D\right)+2 \\
		&=\ln(2^{j-i+2})+\ln(D)+2=(j-i+2) \ln(2) + \ln(D)+2
	\end{align*}
	and thus $d_H(p,q) \leq (j-i)\ln(2)+\ln(D)+2+\ln 4$.
	
	Together with \eqref{eq:lem:ancestor_circle}, it shows that 
	$$(j-i)\ln 2 \leq d_H(p,q) \leq (j-i) \ln(2)+\ln(D)+2+\ln 4.$$
	The result follows from $d_1(p,q)=j-i$.
\end{proof}

\begin{figure}
	\centering
	\includegraphics{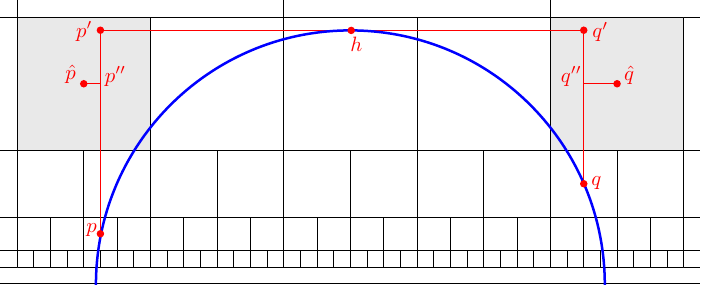}
	\caption{Proof of Lemma~\ref{lem:circle}\label{fig:circle}}
\end{figure}

\begin{lemma}\label{lem:circle}
	For any $p,q \in \discrete^D$, we have 
	\[
	 -7 \ln 2 < d_{H} (p,q)-\ln(2)\cdot d_1(p,q) \leq \ln(D)+2+6 \ln 2.
	\]
\end{lemma}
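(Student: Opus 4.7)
The plan is to case-split on whether one of $p, q$ is an ancestor of the other. In that ancestor (or equality) case, Lemma~\ref{lem:ancestor_circle} already gives $0 \leq d_H(p,q) - \ln(2)\cdot d_1(p,q) \leq \ln D + 2 + \ln 4$, which lies comfortably in the claimed window. So the work is in the incomparable case, where the $d_2$-path from $p$ to $q$ has a true bridge at level $\level(p,q)$, and in particular $d_2(p,q) = 2\level(p,q) - \level(p) - \level(q) + 1$.

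In this incomparable case, let $h$ be the highest point of the geodesic arc $pq$, and set $p' = (x(p), z(h))$ and $q' = (x(q), z(h))$. Since $p$ and $q$ are cell centers, $d_H(p,p') = \ln(z(h)/z(p))$ and $d_H(q,q') = \ln(z(h)/z(q))$ with $z(p)=(3/2)2^{\level(p)}$ and $z(q)=(3/2)2^{\level(q)}$. Applying Lemma~\ref{lem:semicircle} I obtain the sandwich
\[
A - 2\ln(3/2) \;\leq\; d_H(p,q) \;<\; A - 2\ln(3/2) + 2, \qquad A \;:=\; 2\ln z(h) - (\level(p)+\level(q))\ln 2.
\]

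For the upper bound of the lemma, I would combine this sandwich with the inequality $d_1(p,q) \geq d_2(p,q) - 2$ from Theorem~\ref{th:discmain}, reducing the problem to bounding $2\ln z(h) - 2\level(p,q)\ln 2$ from above. Lemma~\ref{lem:bridge_circle} supplies exactly this bound, $\ln z(h) - \level(p,q)\ln 2 \leq \tfrac{1}{2}\ln D + \ln 2$; the endpoint-degeneracies $h = p$ or $h = q$ are strictly sharper since then $z(h) \in \{z(p), z(q)\}$ and $\level(p), \level(q) \leq \level(p,q)$, so a separate estimate recovers the same inequality. Collecting the constants yields $d_H(p,q) - \ln(2) d_1(p,q) \leq \ln D + 5\ln 2 - 2\ln 3 + 2$, which fits in $\ln D + 2 + 6\ln 2$ because $5\ln 2 - 2\ln 3 < 6\ln 2$.

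For the lower bound, I would route $d_1(p,q)$ through $\hat p := b(p')$ and $\hat q := b(q')$. Both lie at level $i = \lfloor \log_2 z(h)\rfloor$, and the inequality $z(h) \geq z(p) > 2^{\level(p)}$ forces $i \geq \level(p)$ (symmetrically for $q$), so nested cell containment makes $\hat p$ an ancestor of $p$ and $\hat q$ an ancestor of $q$. The triangle inequality for $d_1$ then gives
\[
d_1(p,q) \;\leq\; (i - \level(p)) + d_1(\hat p, \hat q) + (i - \level(q)) \;\leq\; 2\log_2 z(h) - \level(p) - \level(q) + 5,
\]
where the last step uses Lemma~\ref{lem:embedding0}. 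Substituting this into the left side of the sandwich yields $d_H(p,q) - \ln(2) d_1(p,q) \geq -2\ln 3 - 3\ln 2$, which exceeds $-7\ln 2$ exactly because $\ln 9 < \ln 16$. The main obstacle is not any single estimate but the bookkeeping of additive constants together with verifying that Lemma~\ref{lem:bridge_circle} and the ancestor argument for $\hat p, \hat q$ remain valid at the degenerate configurations where $h$ coincides with an endpoint; in each such case the relevant inequality becomes sharper, not harder.
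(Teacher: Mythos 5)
Your proposal is correct and takes essentially the same approach as the paper: both reduce the ancestor case to Lemma~\ref{lem:ancestor_circle}, and in the incomparable case both split $d_H(p,q)$ through $p',q'$ via Lemma~\ref{lem:semicircle}, then control $d_1(p,q)$ from above via $\hat p,\hat q$ and Lemma~\ref{lem:embedding0}, and from below via Lemma~\ref{lem:bridge_circle} together with Theorem~\ref{th:discmain}. The only (minor) organizational difference is that you invoke the closed form $d_2(p,q)=2\,\level(p,q)-\level(p)-\level(q)+1$ directly, whereas the paper tracks the truncation of the $d_2$-path relative to $\hat p,\hat q$; this yields slightly different but compatible constants.
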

\begin{proof}
	Without loss of generality, we assume that $\level(p) \leq \level(q)$. 
	The case where $q$ is an ancestor of $p$ is handled by Lemma~\ref{lem:ancestor_circle},
	so we assume that $q$ is not an ancestor of $p$.
	Let $p',q'$ be defined as above. (See \figurename~\ref{fig:circle}.) Let $\hat p=b(p')$, 
	$\hat q=b(q')$, $p''=(x(p),\hat z(p))$ and $q''=(x(q),\hat z(q))$.	
		Let $i=\level(p)$ and $j=\level(\hat p)=\level(\hat q)$.
	
	By Lemma~\ref{lem:embedding0}, 	we have $d_1(\hat p,\hat q) \leq 5$, and thus
	\begin{equation} \label{eq:lem:circle:1}
		d_1(p,q)  \leq d_1(p,\hat p)+d_1(q,\hat q)+5.
	\end{equation}
	
	Suppose that $q=\hat q$, and thus $d_1(q,\hat q)=0$. 
	Then the $d_1$-path from $p$ to $q$ goes through $\hat p$,
	and we have $d_1(p,q) \geq d_1(p,\hat p)$, therefore $d_1(p,q) \geq d_1(p,\hat p)+d_1(q,\hat q)$.
	
	Now suppose that $q \neq \hat q$. It implies that $p \neq \hat p$, and that
	$h$ is between $p$ and $q$ along the geodesic arc $pq$. So by 
	Lemma~\ref{lem:bridge_circle}, we have $\level(p,q) \geq \log_2 z(h)-\frac 12\log_2 D-1$.
	As $z(h) \geq 2^{j}$, it implies that $\level(p,q) \geq j- 1- \frac 12\log_2 D$. Hence the $d_2$-path
	$pq$ follows the path $p \hat p$ until $\hat p$, or a descendant of $\hat p$
	at most $1+\frac 12\log_2 D$ levels below $\hat p$.
	So the portion of the $d_2$-path $pq$ before the bridge has length at least
	$d_1(p,\hat p)-\frac 12\log_2 D-1$. Similarly, the portion after the bridge has
	length at least $d_1(q,\hat q)-\frac 12\log_2 D-1$.
	Therefore, $d_2(p,q) \geq d_1(p,\hat p)+d_1(q,\hat q)-2- \log_2 D$.
	By Theorem~\ref{th:discmain}, it implies that 
	$d_1(p,q) \geq d_1(p,\hat p)+d_1(q,\hat q)-4- \log_2 D$.
	
	Together with Equation~\eqref{eq:lem:circle:1}, it implies that, whether $q =\hat q$ or not,
	\[
		 -5 \leq d_1(p,\hat p)+d_1(q,\hat q) - d_1(p,q) \leq 4+\log_2 D.
	\]
	We also have $d_1(p,\hat p)=j-i$ and thus
	\begin{align*}
		d_H(p,p'')=\ln(z(p'')/z(p)) & =
		\ln\left(\left(3\cdot 2^{j-1}\right)/\left(3\cdot 2^{i-1}\right)\right)\\
		& =(j-i)\ln 2 =\ln(2) \cdot d_1(p,\hat p).
	\end{align*}
	Similarly, we have $d_H(q,q'')=\ln(2)\cdot d_1(q,\hat q)$, and thus
	\begin{align*}
		-5 \ln 2 & \leq d_H(p,p'')+d_H(q,q'')- \ln(2)\cdot d_1(p,q) \\ & 
		\leq \ln(2) \cdot \log_2(D)+4 \ln 2 
		= \ln(D)+4 \ln 2.
	\end{align*}
	We also have
	$d_H(p',p'')= |\ln(z(p')/z(p'')| \leq \ln 2$ and, 
	similarly, $d_H(q',q'') \leq \ln 2$.
	It follows that 
	\[
		-7 \ln 2 \leq d_H(p,p')+d_H(q,q') - \ln(2)\cdot d_1(p,q) \leq \ln(D)+6 \ln 2.
	\]
	The result  follows from Lemma~\ref{lem:semicircle}.
\end{proof}

It follows from Theorem~\ref{th:discmain}, 
Lemma~\ref{lem:embedding1} and Lemma~\ref{lem:circle} that our
embedding $b:\poincare^D \to \discrete^D$ gives $O(\log D)$ additive distortion with
respect to $d_1$ and $d_2$.
\begin{theorem} \label{th:embedmain}
	For any $p,q \in \poincare^D$, we have
	$|d_H(p,q)-\ln(2) \cdot d_1(b(p),b(q))| \leq 3 \ln(D)+O(1)$ and 
	$|d_H(p,q)-\ln(2) \cdot d_2(b(p),b(q))| \leq 3 \ln(D)+O(1)$. 
\end{theorem}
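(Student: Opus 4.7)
The plan is to assemble the theorem from three ingredients already established: Lemma~\ref{lem:embedding1} (which bounds the hyperbolic distance from a point to the center of its containing cell), Lemma~\ref{lem:circle} (which compares $d_H$ and $d_1$ on the discrete set $\discrete^D$), and Theorem~\ref{th:discmain} (which compares $d_1$ and $d_2$). Since each result is stated with explicit constants, there is no new geometric content to establish; the proof is a chained application of the triangle inequality in $(\poincare^D, d_H)$.

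First, I would apply the triangle inequality in $(\poincare^D, d_H)$ to write
\[
	|d_H(p,q) - d_H(b(p),b(q))| \leq d_H(p,b(p)) + d_H(q,b(q)) < 2\ln D,
\]
where the last bound comes from applying Lemma~\ref{lem:embedding1} to both $p$ and $q$. Second, since $b(p), b(q) \in \discrete^D$, Lemma~\ref{lem:circle} gives
\[
	|d_H(b(p),b(q)) - \ln(2) \cdot d_1(b(p),b(q))| \leq \ln D + O(1).
\]
Combining these two bounds by one more triangle inequality yields
\[
	|d_H(p,q) - \ln(2) \cdot d_1(b(p),b(q))| \leq 3\ln D + O(1),
\]
which is the first claim.

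For the second claim concerning $d_2$, I would invoke Theorem~\ref{th:discmain}, which asserts that on $\discrete^D$, the quantities $d_1$ and $d_2$ differ by at most $2$. Consequently,
\[
	|\ln(2) \cdot d_1(b(p),b(q)) - \ln(2) \cdot d_2(b(p),b(q))| \leq 2 \ln 2 = O(1),
\]
and a final triangle inequality absorbs this $O(1)$ term into the additive constant already present in the first claim, giving the desired bound for $d_2$.

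There is essentially no obstacle here: all the heavy lifting has been done in Sections~\ref{sec:spdiscrete} and~\ref{sec:embedding} leading up to this theorem. The only care required is bookkeeping to ensure the constants in front of $\ln D$ combine correctly (the $\ln D$ coming from Lemma~\ref{lem:embedding1} applied twice contributes $2\ln D$, and Lemma~\ref{lem:circle} contributes $\ln D$, giving the stated coefficient $3$) and that absolute values are handled symmetrically so that both directions of the additive-distortion inequality follow from the same chain of triangle inequalities.
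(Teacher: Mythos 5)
Your proof is correct and matches the paper's approach exactly: the paper itself states that Theorem~\ref{th:embedmain} "follows from Theorem~\ref{th:discmain}, Lemma~\ref{lem:embedding1} and Lemma~\ref{lem:circle}," which is precisely the chain of triangle inequalities you spelled out. Your bookkeeping of the constants ($2\ln D$ from two applications of Lemma~\ref{lem:embedding1}, $\ln D + O(1)$ from Lemma~\ref{lem:circle}, and $2\ln 2$ from Theorem~\ref{th:discmain}) is accurate and gives the stated coefficient $3$.
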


%%%%%%%%%%%%%%%%%%%%%%%%%%%%%%%%%%%%%%%%%%%%%%%%%%%%%%%%%%%%%%%%%%%%%%%%%%%%%%%%%%%%%%%%%%%%%%%%%

\section{Compressed quadtrees} \label{sec:quadtree}

In this section, we present {\it compressed quadtrees}, 
which will be needed for our graph metric embeddings and for approximate near neighbor searching. 
Compressed quadtrees are presented in the book by Har-Peled~\cite{HPbook}.
Our quadtrees will record points in $\discrete^{D}$, which correspond to quadtree boxes
in $\IR^{D-1}$. (See the discussion below.) Hence, our quadtrees record a set of quadtree
boxes instead of recording a set of points in $\IR^{D-1}$. This does not affect the
time and space bounds~\cite[Lemma 2.11]{HPbook}.

Let $\cell$ be a cell of our binary tiling of $\poincare^D$. 
The {\it vertical projection} $\cell_x$ of $\cell$ is the set of the projections $x(p)$ of
all the points $p \in \cell$. Thus, $\cell_x$ is a box in $\IR^{D-1}$, and more precisely,
$\cell_x$ is of the form 
\[
	[k_12^{-i},(k_1+1)2^{-i}] \times \dots \times [k_{D-1}2^{-i},(k_{D-1}+1)2^{-i}]
	\text{ where } k_1,\dots,k_{D-1} \in \IZ.
\]
A box $\cell_x$ of this form is called a {\it quadtree box}.
Thus, each point $b \in \discrete^D$ corresponds to exactly one quadtree box
$\cell_x(b) \subset \IR^{D-1}$, and the corresponding cell $\cell(b)$ of our binary tiling of $\poincare^D$.

It allows us extend the notions of children, parents, ancestors, descendants and levels to
quadtree boxes.
So for any two cells $\cell$ and $\cell'$ of our binary tiling, we say that $\cell_x$
is a child (resp. parent, ancestor, descendant) of $\cell'_x$ if $\cell$ is
a child (resp. parent, ancestor, descendant) of $\cell'$. The level $\level(\cell_x)$
of the quadtree cell $\cell_x$ is $\level(\cell)$.

Let $Q$ be a set of $n$ points in $\discrete^D$, such that $x(Q) \subset (0,1)^{D-1}$.
A compressed quadtree $\mathcal T$ storing $Q$ is a tree such that each node $\nu$
of $\mathcal T$ records one cell $\cell(\nu)$ of our binary tiling, and, equivalently,
the quadtree box $\cell_x(\nu)$ and the center point $b(\nu)=b(\cell) \in \discrete^D$. 
This tree is constructed as follows. 
A cell $\cell$  that contains no point of $Q$, and 
such that no descendant of $\cell$ contains a point of $Q$, is said to be {\it empty}. A
cell $\cell(\nu)$ where $\nu$ is a node of $\mathcal T$ is a {\it leaf cell} if it contains at most one point of $P$.
If $\cell(\nu)$ is a leaf cell, then $\nu$ is a leaf node of $\mathcal T$.
The root records the cell $[0,1]^{D-1}\times [1,2]$. The children of a node $\nu$
that is not a leaf are constructed as follows:
\begin{itemize}
	\item If two or more children cells of $\cell(\nu)$ are non-empty, then we create a node
		$\nu_i$ corresponding to each such child cell, and make $\nu_i$ a child
		of $\nu$. In this case, $\nu$ is an {\it ordinary node}.
		(See \figurename~\ref{fig:Qnodes}a.)
	\item If only one child of $\cell(\nu)$ is non-empty, 
		let $\nu_1$ be the descendant of $\nu$ such that $\cell_x(\nu_1) \cap Q =\cell_x(\nu) \cap Q$
		and the level of $\nu_1$ is minimum. Then $\nu_1$ is the only child of $\nu$, and
		$\nu$ is a {\it compressed node}. The {\it region} associated with
		$\nu$ is $\rg(\nu)=\cell_x(\nu)-\cell_x(\nu_1)$.
		(See \figurename~\ref{fig:Qnodes}b.)
\end{itemize} 

\begin{figure}
	\begin{center}	
		\includegraphics[width=\textwidth]{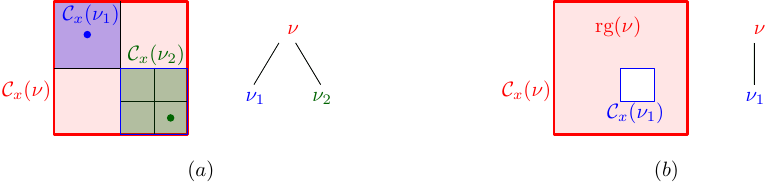}
	\end{center}
	\caption{Nodes of a compressed quadtree. (a) An ordinary node $\nu$, with two children cells $\nu_1$ and $\nu_2$.
	(b) A compressed node $\nu$, with its child $\nu_1$. The region $\rg(\nu)$ is
		shaded in red.\label{fig:Qnodes}}
\end{figure}

The quadtree boxes stored at the leaf nodes and the regions of the compressed nodes form a partition of $[0,1]^{D-1}$.
This compressed quadtree $\mathcal T$ has $O(n)$ nodes, it can be constructed in $O(Dn \log n)$ time,  
and the node corresponding to the leaf cell or compressed node region containing a query point $q \in \IR^{D-1}$ can be 
found in time $O(D \log n)$ time~\cite[Chapter 2]{HPbook}.

In addition, we can answer {\it cell queries} in $O(D \log n)$ time: Given a query quadtree box
$\cell_x$, we can find the largest box $\cell_x^-$ stored in $\mathcal T'$ such that 
$\cell_x^- \subset \cell_x$, and we can find the smallest box $\cell_x^+$ stored in $\mathcal T'$ that
contains $\cell_x$.

%%%%%%%%%%%%%%%%%%%%%%%%%%%%%%%%%%%%%%%%%%%%%%%%%%%%%%%%%%%%%%%%%%%%%%%%%%%%%%%%%%%%%%%%%%%%%%%%%

\section{Spanner in the  discrete model}\label{sec:spannerdiscrete}

In this section, we give a 2-additive spanner $S$ in the first discrete model $(\discrete^D,d_1)$,
for a set $P\subset \discrete^D$ of $n$ points.
Our approach is the following: We overlay the $d_2$-paths of all pairs of points
in $P$, and add a new vertex, called a {\it Steiner} vertex, at each point where such
a path bends---in other words, we add a Steiner vertex at each endpoint of each bridge,
if this endpoint is not in $P$. (See~\figurename~\ref{fig:size}.) We also add a Steiner vertex
whenever two such $d_2$-paths merge. The length of an edge is the number of cell boundaries that it crosses.

\begin{figure}
\centering
	\includegraphics{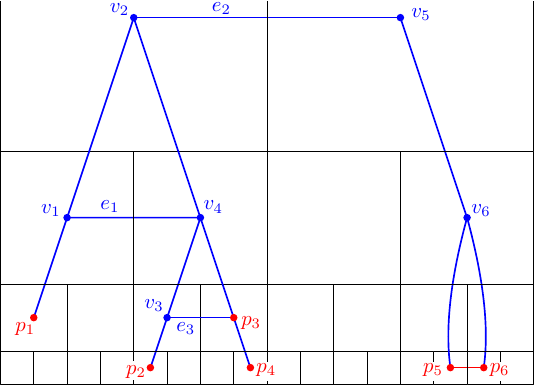}
	\caption{Our construction of a 2-additive spanner $S$ for the points $p_1,\dots,p_6$
		with respect to the metric $d_1$. 
		The Steiner vertices are $v_1,\dots,v_6$. The vertices $v_1$ and $v_4$ are the bending
		points of the $d_2$-path from $p_1$ to $p_2$, hence $v_1v_4$ is the bridge between $p_1$
		and $p_2$. The vertex $v_3$ is the bending point of
		the $d_2$-path from $p_2$ to $p_3$. The distance from $p_3$ to $p_6$ within $S$ is
		$d_S(p_3,p_6)=6$, as the edge $v_6p_6$ has length 2. On the other hand, we have $d_1(p_3,p_5)=4$.
		The vertex $v_6$ is the Steiner vertex at which the $d_2$-paths from $p_5$ and $p_6$ to $p_3$
		merge.
	\label{fig:size}}
\end{figure}

The resulting graph $S$ contains, by construction, a path
of length $d_2(p,q)$ between any two points $p,q \in P$. By Theorem~\ref{th:discmain}, it
follows that:
\begin{proposition}\label{prop:spannererror}
	For any $p,q \in P$, there is a path from $p$ to $q$  in the graph $S$
	of length at least $d_1(p,q)$ and at most $d_1(p,q)+2$.
\end{proposition}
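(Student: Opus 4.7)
The plan is to prove the two inequalities separately, using Theorem~\ref{th:discmain} for the upper bound and the triangle inequality for $d_1$ for the lower bound.

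For the upper bound, I would argue that $S$ literally contains the $d_2$-path from $p$ to $q$. By definition of $S$, the overlay of all $d_2$-paths is present, and Steiner vertices are only placed at bridge endpoints and at points where such paths merge. Consequently, the $d_2$-path from $p$ to $q$---consisting of an upward leg from $p$ to $\bar p$, the bridge $\bar p\bar q$, and a downward leg from $\bar q$ to $q$---appears in $S$ as a sequence of edges, possibly subdivided by additional Steiner vertices that arose from merges with other $d_2$-paths. The key observation is that the weight of each edge equals the number of cell boundaries it crosses, and summing these weights along the subdivided $d_2$-path yields exactly the total number of moves, namely $d_2(p,q)$. Invoking Theorem~\ref{th:discmain} then gives a $p$-to-$q$ path in $S$ of length at most $d_1(p,q)+2$.

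For the lower bound, I would show that every edge $uv$ of $S$ has weight equal to $d_1(u,v)$. Indeed, by construction each edge of $S$ is of one of two kinds: either a vertical edge along an upward/downward leg of some $d_2$-path, in which case $u$ is an ancestor of $v$ (or vice versa) and the weight equals the level difference, which is exactly $d_1(u,v)$; or a bridge edge between two horizontal neighbors at the same level, of weight $1=d_1(u,v)$. Therefore for any $p$-to-$q$ path $p=u_0,u_1,\dots,u_k=q$ in $S$, the total weight equals $\sum_i d_1(u_{i-1},u_i)\geq d_1(p,q)$ by repeated application of the triangle inequality for $d_1$.

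The main subtlety---rather than any hard obstacle---is checking the bookkeeping of edge weights along a $d_2$-path once other $d_2$-paths merge into it and introduce extra Steiner vertices mid-leg. The point to verify is that these extra vertices only subdivide an existing vertical edge into two edges whose weights sum to the original weight, so the total length of the $p$-to-$q$ path in $S$ remains $d_2(p,q)$ and the two conclusions follow immediately.
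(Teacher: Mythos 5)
Your proposal is correct and matches the paper's argument, which in the paper is compressed into a single sentence: $S$ contains by construction a path of total weight $d_2(p,q)$ between any $p,q\in P$, and Theorem~\ref{th:discmain} gives $d_1(p,q)\leq d_2(p,q)\leq d_1(p,q)+2$. Your only deviation is in the lower bound, where you derive $d_1(p,q)\leq\text{length}$ for \emph{every} path in $S$ via the observation that each edge $uv$ has weight $d_1(u,v)$ together with the triangle inequality, rather than reading it directly off $d_1\leq d_2$ for the exhibited $d_2$-path; this is a slightly stronger statement but the same underlying approach, and the bookkeeping you verify about subdivided vertical edges summing to the original weight is exactly the implicit content of the paper's ``by construction.''
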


Using compressed quadtrees (Section~\ref{sec:quadtree}), we now give a bound on
the size of $S$, as well its construction time.
Let $P$ be our input set of points in $\poincare^D$, and let $x(P)=\{x(p) : p \in P\}$
be its vertical projection. Without loss of generality, we assume that 
$x(P) \subset [0,1]^{D-1}$. If it were not the case, we could apply to $P$ a scaling 
transformation centered at $O$ followed by a horizontal translation so that 
the vertical projection of $P$ is in $[0,1]^{D-1}$. The hyperbolic distance $d_H(\cdot)$
is invariant under these transformations.
We construct the compressed quadtree $\mathcal T$ that records the points in $P$.

Let $\nu$ be a node in $\mathcal T$, and let $r$ be the point in $\discrete^D$ that
corresponds to $\cell(\nu)$, so  $\cell_x(r)=\cell_x(\nu)$. Given a horizontal neighbor
$r'$ of $r$, we can check whether $rr'$ is a bridge as follows. First check using
cell queries in $\mathcal T$ whether the cells $\cell(r)$ and $\cell(r')$
are non-empty. If so, check whether, among the children of $\cell(r)$ and
$\cell(r')$, there is a pair of non-empty cells that are not horizontal neighbors.
For a given horizontal neighbor $r'$ or $r$, we can check
it using $2^{O(D)}$ cell queries, and since there are $2^{O(D)}$ horizontal
neighbors $r'$, and we can do it using $2^{O(D)}$ queries in total.
We perform this operation for each node $\nu$ of $\mathcal T$.
As there are $O(n)$ nodes in $\mathcal T$, and cell queries can be answered in $O(D \log n)$
time, this process generates $2^{O(D)}n$ bridges in $2^{O(D)}n\log n$ time.

We still need to find bridges that do not correspond to any node $\nu \in \mathcal T$,
that is, bridges $rr'$ such that neither $\cell(r)$ nor $\cell(r')$ is recorded
in $\mathcal T$.
As the quadtree boxes stored at leaf cells and the regions associated with the compressed nodes form a partition
of $[0,1]^{D-1}$, such a bridge $rr'$ must satisfy $x(r) \in \rg(\nu)$ and
$x(r') \in \rg(\nu')$ for some compressed nodes $\nu,\nu'$. Then $rr'$ must connect
the points of $\discrete^D$ corresponding to the cell of the children of $\nu$ and $\nu'$.
Given $\nu$ and $\nu'$, we  can determine the
level of this bridge, if it exists, in $O(1)$ time, using Lemma~\ref{lem:bridge}. 
We check the existence of such a bridge
for all pairs of compressed nodes $\nu,\nu'$ whose cells are adjacent. 
There are $2^{O(D)}n$ such pairs to check, because any cell $\cell_x(\nu)$ is adjacent
to $2^{O(D)}$ cells of size at least the size of $\cell_x(\nu)$.
So again, this process generates $2^{O(D)}n$ bridges.

The construction above also gives us the vertical edges descending from every
Steiner point. It follows that:

\begin{theorem}\label{th:spanner_d1}
	Given a set $P \subset \discrete^D$ of $n$ points, we can compute in time $2^{O(D)}n \log n$ 
	a $2$-additive spanner $S$ of $P$
	that has $2^{O(D)}n$ Steiner vertices and edges. More precisely, $S$ is 
	a weighted graph embedded 	in $\discrete^D$,
	any of its edge $uv$ has length $d_1(u,v)$, and for any $p,q \in P$, there
	is a path in $S$ from $p$ to $q$ that has length $d_2(p,q)$, where $d_1(p,q) \leq d_2(p,q) \leq d_1(p,q)+2$.
\end{theorem}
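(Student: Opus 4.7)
The plan is to verify the theorem's three claims by following the construction sketched in the paragraphs just above: build a compressed quadtree $\mathcal{T}$ storing the vertical projections of $P$, enumerate all bridges (the single horizontal move of each $d_2$-path), place a Steiner vertex at every bridge endpoint not already in $P$ and at every point where two vertical subpaths merge, and finally insert the weighted vertical edges along the resulting tree structure.

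Correctness is immediate from the construction together with Theorem~\ref{th:discmain}. By construction every edge of $S$ is weighted by its $d_1$-length and, for every $p,q \in P$, the unique $d_2$-path from $p$ to $q$ appears as a path in $S$; hence $d_S(p,q) \leq d_2(p,q) \leq d_1(p,q)+2$. The reverse inequality $d_1(p,q) \leq d_S(p,q)$ follows from the $d_1$-triangle inequality together with the fact that $S$ is embedded in $(\discrete^D,d_1)$. This establishes the sandwich and reproves Proposition~\ref{prop:spannererror}.

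For the size bound I would classify each bridge $rr'$ according to where its endpoints fall in the partition of $[0,1]^{D-1}$ induced by $\mathcal{T}$. In the first class at least one endpoint satisfies $\cell_x(r)=\cell_x(\nu)$ for some node $\nu$ of $\mathcal{T}$; in the second class both $x(r)$ and $x(r')$ lie inside regions $\rg(\nu),\rg(\nu')$ of compressed nodes. The first class contributes $2^{O(D)}n$ bridges because each of the $O(n)$ quadtree nodes has only $3^{D-1}-1 = 2^{O(D)}$ horizontal neighbors to test. The second class contributes $2^{O(D)}n$ bridges by a standard packing argument: any quadtree cell is adjacent to $2^{O(D)}$ cells of at least equal size, which, summed over the $O(n)$ compressed-node cells, yields $2^{O(D)}n$ candidate pairs. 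Each bridge creates at most two Steiner vertices, one horizontal edge and one vertical edge on each side, so $|V(S)| + |E(S)| = 2^{O(D)}n$.

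The running time decomposes in parallel with the size argument. The compressed quadtree is built in $2^{O(D)}n \log n$ time; probing the $2^{O(D)}$ horizontal neighbors of each quadtree node by $2^{O(D)}$ cell queries apiece, at cost $O(D \log n)$ each, disposes of the first class in $2^{O(D)}n \log n$ time; and for the second class Lemma~\ref{lem:bridge} identifies the candidate bridge level from the coordinates of any pair of adjacent compressed nodes in $O(1)$ time, summing to $2^{O(D)}n$. One further traversal of all bridges inserts the vertical edges and detects merging points in $2^{O(D)}n$ extra time. I expect the main obstacle to be the second-class enumeration: one must verify both that every bridge missed by the first pass really is witnessed by some adjacent pair of compressed-node regions (a completeness claim) and that adjacency among compressed cells of very different sizes still packs to $2^{O(D)}n$ pairs. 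These are routine compressed-quadtree facts, but pinning down completeness and invoking Lemma~\ref{lem:bridge} with the correct level offsets is where the bookkeeping needs care.
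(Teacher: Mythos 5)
Your proposal follows the paper's own construction essentially line for line: you build the compressed quadtree on $x(P)$, classify bridges into those with at least one endpoint at a quadtree node (enumerated by probing the $3^{D-1}-1$ horizontal neighbors of each node via cell queries) versus those whose endpoints both lie in compressed-node regions (enumerated by pairing adjacent compressed nodes and calling Lemma~\ref{lem:bridge}), then add vertical edges and merge-point Steiner vertices. The correctness argument via the $d_2$-path overlay and Theorem~\ref{th:discmain}, the $2^{O(D)}n$ counting, and the $2^{O(D)}n\log n$ time bound all coincide with the paper's treatment, and the completeness concern you flag for the second class is precisely what the paper disposes of by noting that the leaf cells and compressed-node regions partition $[0,1]^{D-1}$.
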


%%%%%%%%%%%%%%%%%%%%%%%%%%%%%%%%%%%%%%%%%%%%%%%%%%%%%%%%%%%%%%%%%%%%%%%%%%%%%%%%%%%%%%%%%%%%%%%%%

\section{Embedding into a graph metric and spanner for $\poincare^D$}\label{sec:hspanner}

We can embed any set $P$ of $n$ points in $\poincare^D$ into a graph metric as follows.
First we map each $p \in P$ to the point $b(p) \in \discrete^D$. Then we construct the
spanner $S$ for these points with respect to $d_1$ as in Theorem~\ref{th:spanner_d1}. 
We multiply all the edge lengths of $S$ by $\ln(2)$, thus obtaining a graph $S'$ 
such that the length $d_{S'}(p,q)$ of a shortest path in $S'$ between any two points 
$p,q \in \discrete^D$ satisfies
\[
	\ln(2)\cdot d_1(p,q) \leq d_{S'}(p,q) \leq \ln(2)\cdot d_2(p,q).
\]

By Theorem~\ref{th:embedmain}, this path length approximates $d_H(p,q)$ within an additive
error  $O(\log D)$. In summary:

\begin{corollary}\label{cor:embed}
	Given a set $P \subset \poincare^D$ of $n$ points, we can compute in time $2^{O(D)}n \log n$ a 
	positively weighted graph $G(V,E)$ 
	that has $2^{O(D)}n$ vertices and edges, and a mapping $b:P \to V$, such that
	$|d_H(p_i,p_j)-d_G(b(p_i),b(p_j))|=O(\log D)$ for any $p_i,p_j \in P$.
\end{corollary}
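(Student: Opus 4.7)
The plan is to compose the point-wise embedding $b:\poincare^D \to \discrete^D$ studied in Section~\ref{sec:embedding} with the 2-additive discrete spanner of Theorem~\ref{th:spanner_d1}, and then rescale the integer cell-crossing weights by $\ln 2$ to convert them into hyperbolic lengths. Concretely, I would first map $P$ to $b(P) \subset \discrete^D$ (after the affine normalization used in Section~\ref{sec:spannerdiscrete} so that $x(P) \subset [0,1]^{D-1}$, which does not alter $d_H$). Next I would invoke Theorem~\ref{th:spanner_d1} on $b(P)$ to obtain a weighted graph $S$ with $2^{O(D)}n$ Steiner vertices and edges, constructible in $2^{O(D)}n \log n$ time, in which the shortest-path length between any two images $b(p), b(q)$ is sandwiched between $d_1(b(p),b(q))$ and $d_2(b(p),b(q))$. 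Finally I would scale every edge weight by $\ln 2$ to produce $G$.

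For the distortion guarantee, fix $p,q \in P$ and write $u=b(p)$, $v=b(q)$. By Theorem~\ref{th:spanner_d1}, $S$ contains a $u$-$v$ path of length $d_2(u,v)$, while $d_1(u,v)$ is a lower bound on any $u$-$v$ path; after the $\ln 2$ rescaling this gives
\[
    \ln 2 \cdot d_1(u,v) \;\leq\; d_G(u,v) \;\leq\; \ln 2 \cdot d_2(u,v).
\]
Theorem~\ref{th:embedmain} asserts that both $\ln 2 \cdot d_1(u,v)$ and $\ln 2 \cdot d_2(u,v)$ lie within an additive $O(\log D)$ of $d_H(p,q)$, so the same window transfers to $d_G(u,v)$ by the sandwich above.

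The vertex and edge counts, as well as the $2^{O(D)}n \log n$ construction time, are inherited verbatim from Theorem~\ref{th:spanner_d1}; the additional work of evaluating $b$ on the $n$ input points and of performing the final multiplication by $\ln 2$ is absorbed in these bounds. I do not expect a genuine obstacle: the argument is essentially a routine composition of the two previously established theorems, and the only bookkeeping is to verify that the $d_1$/$d_2$ asymmetry produced by Theorem~\ref{th:spanner_d1} is absorbed cleanly by the symmetric $O(\log D)$ guarantee of Theorem~\ref{th:embedmain}, which it is because that guarantee applies to $d_1$ and $d_2$ simultaneously.
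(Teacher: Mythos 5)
Your proposal is correct and follows essentially the same route as the paper: map $P$ into $\discrete^D$ via $b$, build the discrete $2$-additive spanner of Theorem~\ref{th:spanner_d1}, rescale edge weights by $\ln 2$, and transfer the error bound through Theorem~\ref{th:embedmain}. The only detail you make more explicit than the paper is the sandwich $\ln 2\cdot d_1 \leq d_G \leq \ln 2\cdot d_2$ together with the fact that Theorem~\ref{th:embedmain} bounds both $d_1$ and $d_2$ simultaneously, which is exactly the intended reading.
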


The graph from Corollary~\ref{cor:embed}
is not a spanner in the sense that the length of an edge $b(p)b(q)$, $p,q \in \poincare^D$ is 
$\ln(2) \cdot d_1(b(p),b(q))$ instead 
of $d_H(p,q)$. If we set the edge weights to be $d_H(b(p),b(q))$ for each edge $b(p)b(q)$, then we introduce
a constant additive error for each edge of $S$, and since a shortest path in $S$ may consist of $\Omega(n)$
edges, we will no longer have $O(\log D)$ additive error.

In order to obtain an additive spanner 
for $P \subset \poincare^D$ that is embedded in $\poincare^D$, we will add more edges, which will act as shortcuts.
So let $T$ be the forest obtained by removing the horizontal edges from $S$, and let $k$ be a fixed integer.
We orient all the edges of $T$ upwards.
We construct a $k$-transitive closure spanner $T_k$ of $S$~\cite{Raskhodnikova10,Thorup97}. This graph $T_k$ has $O(n \cdot \lambda_k(n))$ edges, where $\lambda_k(n)$ is the $k$-th row of the 
inverse Ackermann function. The vertices of $T_k$ are the points $b(p)$, $p \in P$ and
the edges of $T_k$ are a superset of the edges of $T$, also oriented upwards. The key property of $T_k$ 
is that, if there is a path from vertex $p$ to $q$ in $T_k$, then there exists such a path from $p$ to $q$ of
length at most $k$. The $k$-transitive closure spanner can be computed in $O(n \cdot \lambda_k(n))$ 
time~\cite[Algorithm L]{Thorup97}.

Our Spanner $S_k$ is obtained from $T_k$ by adding all the horizontal edges of $S$, and by adding 
each point $p \in P$ as a vertex, together with the edge $pb(p)$. 
Each edge $pq$ of $S_k$ is assigned the weight $d_H(p,q)$. In particular, an edge $pb(p)$ has
weight $O(\log D)$ by Lemma~\ref{lem:embedding1}.

Let $p,q \in P$. We now prove that there is a path of length $d_H(p,q)+O(k \log D)$ in $S_k$. By construction, there
is a path in $S$ from $b(p)$ to $b(q)$ of length $d_2(b(p),b(q))$ that consists of a vertical path $\rho$ up from $b(p)$ 
to a vertex $\bar b(p)$, then a horizontal edge from $\bar b(p)$ to a vertex $\bar b(q) $, 
followed by a vertical path $\rho'$ down from $\bar b(q)$ to $b(q)$.

In $T_k$, the path $\rho$ can be replaced with a path $\rho_k$ of at most $k$ edges. 
The length of each edge $vw$ in $\rho_k$ is $d_H(v,w)$, which is $\ln(2) \cdot d_2(v,w)+O(\log D)$ by Theorem~\ref{th:embedmain}.
So the length of $\rho_k$ (i.e. the sum of the weights of its edges) is $\ln(2) \cdot d_2(b(p),\bar b(p))+O(k\log D)$. Similarly, 
there is a path $\rho'_k$ from $\bar b(q)$ to $b(q)$ in $T_k$ with length 
$\ln(2) \cdot d_2(\bar b(q),b(q))+O(k\log D)$. It follows that the 
length of the path in $S$ consisting of $pb(p)$ followed by $\rho_k$, $\bar b(p)\bar b(q)$, $\rho'_k$ and $qb(q)$ has 
length $\ln(2) \cdot d_2(b(p),b(q))+O(k\log D)$. By Lemma~\ref{lem:embedding1} and Theorem~\ref{th:embedmain}, this length 
is $d_H(p,q)+O(k\log D)$.
\begin{theorem}\label{th:spannerH2}
	Let $k$ be an integer, and let $P$ a set of $n$ points in $\poincare^D$. We can construct in $2^{O(D)}n\log n$
	time an $O(k\log D)$ purely additive spanner of $P$ with $2^{O(D)}n$ Steiner vertices and $2^{O(D)}n \cdot \lambda_k(n)$ edges.
\end{theorem}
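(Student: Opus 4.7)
The plan is to leverage the discrete-model spanner from Theorem~\ref{th:spanner_d1} but reweight its edges with true hyperbolic distances, while controlling the resulting additive error. A naive reweighting of $S$ fails because Lemma~\ref{lem:circle} only guarantees a per-edge additive error of $O(\log D)$, and a $d_2$-path in $S$ can use $\Omega(n)$ edges, so the errors would accumulate. The key trick will be to shorten the long vertical legs of every $d_2$-path to $O(k)$ edges using a transitive-closure spanner on the underlying forest structure; since a canonical $d_2$-path has only two vertical legs and one horizontal bridge, this will yield total additive error $O(k \log D)$.

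Concretely, I would first apply $b:\poincare^D \to \discrete^D$ and invoke Theorem~\ref{th:spanner_d1} to build the $2$-additive spanner $S$ of $b(P)$ in $2^{O(D)} n \log n$ time. Next I would delete the horizontal bridges of $S$ to obtain a forest $T$ of vertical chains, orient $T$ upward, and compute a $k$-transitive-closure spanner $T_k$ of $T$ using Thorup's algorithm; $T_k$ has $O(n \cdot \lambda_k(n))$ edges and the property that any ascending path in $T$ is realized by a path of at most $k$ edges in $T_k$. The final spanner $S_k$ will consist of $T_k$ together with all the horizontal bridges of $S$, and, for each $p \in P$, a pendant edge $p\, b(p)$; every edge $uv$ of $S_k$ will be weighted by $d_H(u,v)$, with pendants of weight $O(\log D)$ by Lemma~\ref{lem:embedding1}. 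The claimed size and time bounds will follow immediately.

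For the distortion, given $p,q \in P$, the canonical $d_2$-path in $S$ from $b(p)$ to $b(q)$ consists of an ascending segment $\rho$ to a bridge endpoint $\bar b(p)$, the bridge $\bar b(p)\bar b(q)$, and a descending segment $\rho'$ to $b(q)$. In $S_k$ each of $\rho$ and $\rho'$ will be replaced by a shortcut of at most $k$ edges; by Theorem~\ref{th:embedmain}, each shortcut edge $uv$ has weight $d_H(u,v) \leq \ln(2)\cdot d_2(u,v) + O(\log D)$, so summing over at most $k$ edges yields $\ln(2) \cdot d_2(b(p),\bar b(p)) + O(k \log D)$ for the upward leg and similarly for the downward leg. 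Combining with the bridge and the two pendant edges and applying Theorem~\ref{th:embedmain} once more end-to-end will give total path length $d_H(p,q) + O(k \log D)$; the matching lower bound is automatic since each edge weight equals a true hyperbolic distance. The main obstacle I expect is in the error bookkeeping: one must be sure the per-edge $O(\log D)$ slack is summed over the at-most-$k$ shortcut edges rather than over the full original vertical path, which hinges on applying Theorem~\ref{th:embedmain} to each shortcut edge individually and then telescoping, rather than iterating it on every node along the original $d_2$-path.
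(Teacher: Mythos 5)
Your proposal is correct and follows essentially the same route as the paper: map $P$ to $\discrete^D$, build the $2$-additive spanner $S$ from Theorem~\ref{th:spanner_d1}, replace the vertical forest by a $k$-transitive-closure spanner $T_k$, reattach the bridges and the pendants $p\,b(p)$, reweight with $d_H$, and bound the error by applying Theorem~\ref{th:embedmain} per shortcut edge over the $O(k)$ edges on each leg. The error bookkeeping you flag as the main obstacle is exactly the point the paper handles the same way, so no gap remains.
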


%%%%%%%%%%%%%%%%%%%%%%%%%%%%%%%%%%%%%%%%%%%%%%%%%%%%%%%%%%%%%%%%%%%%%%%%%%%%%%%%%%%%%%%%%%%%%%%%%

\section{Approximate Voronoi diagram (AVD)}\label{sec:NN}

In this section, we first show how to answer exact near-neighbor queries 
in $(\discrete^D,d_2)$, 
and then we briefly explain how this result yields ANN data structures for 
$(\discrete^D,d_1)$ and $\poincare^D$ with additive error 2 and $O(\log D)$, respectively.

So let $P=\{p_1,\dots,p_n\}$ be a set of $n$ points in $\discrete^D$. 
For a query point $q \in \discrete^D$, we want to find a point $n_2(q) \in P$
such that $d_2(q,n_2(q))$ is minimum. As there can be several closest point to $q$,
we break the ties by taking $n_2(q)$ to be  the point $p_i$ such that 
$d_2(p_i,q)=\min_j d_2(p_j,q)$ and $i$ is minimum.

Without loss of generality, we assume that $x(P) \subset [1/4,1/2]^{D-1}$.
If $x(q) \notin [0,1]^{D-1}$, then $n_2(q)$ is the highest point in $P$, which
the data structure described below can return in constant time.

This data structure records an {\it approximate Voronoi diagram} (AVD), which is a partition
of $\discrete^D$ in $2^{O(D)}n$ regions $V_1,\dots,V_m$ such that each region $V_i$
is associated with a set $R_i \subset P$ of at most $2^{O(D)}$ {\it representative points}.
For any point $q \in V_i$, we have $n_2(q) \in R_i$.

In the definition above, the AVD allows us to find an {\it exact} nearest
neighbor with respect to the second discrete model $(\discrete^D,d_2)$. 
In the first discrete model $(\discrete^D,d_1)$
and in $H^{D}$, it will allow us to find an approximate near-neighbor, with
additive error 2 and $O(\log D)$, respectively.

As in Section~\ref{sec:spannerdiscrete}, we first compute the compressed quadtree
$\mathcal T$ that records our input point set $P$. This quadtree $\mathcal T$ has $O(n)$ nodes, it can
be computed in time $O(Dn \log  n)$ time, and it allows us to perform 
cell queries  in $O(D \log n)$ time. 

\paragraph{Refinement step}
Suppose that $\nu$ is an ordinary node of $\mathcal T$. Let $b(\nu)$ be the corresponding
point of $\discrete^D$, then $b(\nu)$ is the only point of $\discrete^D$ contained in $\cell(\nu)$.
So we associate $\nu$ with a Voronoi region consisting only of $b(\nu)$, and
with a single representative $n_2(b(\nu))$. 

On the other hand, if $\nu$ is a leaf node of $\mathcal T$, then the descendants of $\cell(\nu)$
do not correspond to nodes of $\mathcal T$. So we would like  $\nu$ to be associated with
a Voronoi region $V_i$ that covers $b(\nu)$ and all its descendants. The issue here is
that we may need $\Omega(n)$ representatives for $V_i$, if for instance, $\cell_x(\nu)$
is adjacent to $\Omega(n)$ smaller, disjoint cells of $\mathcal T$.

In order to solve this problem, we will break $\cell_x(\nu)$ into smaller cells, by
inserting more boxes into our quadtree. We call this step of the construction
the {\it refinement step}. For each
node $\nu$ of $\mathcal T$ such that $\cell(\nu)$ is not empty, 
we insert into our compressed quadtree all the boxes that are 
horizontal neighbors of $\cell(\nu)$,  thus obtaining  a quadtree $\mathcal T'$
that is associated with a finer subdivision of $\IR^{D-1}$. 
(See \figurename~\ref{fig:intuition}.) In particular, for a compressed node $\nu$,
we insert the neighbors of the outer cell and the inner cell, that is, the horizontal neighbors
 of $\cell(\nu)$ and $\cell(\nu_1)$, where $\nu_1$ is the child of $\nu$.
\begin{figure}	
	\includegraphics[width=\textwidth]{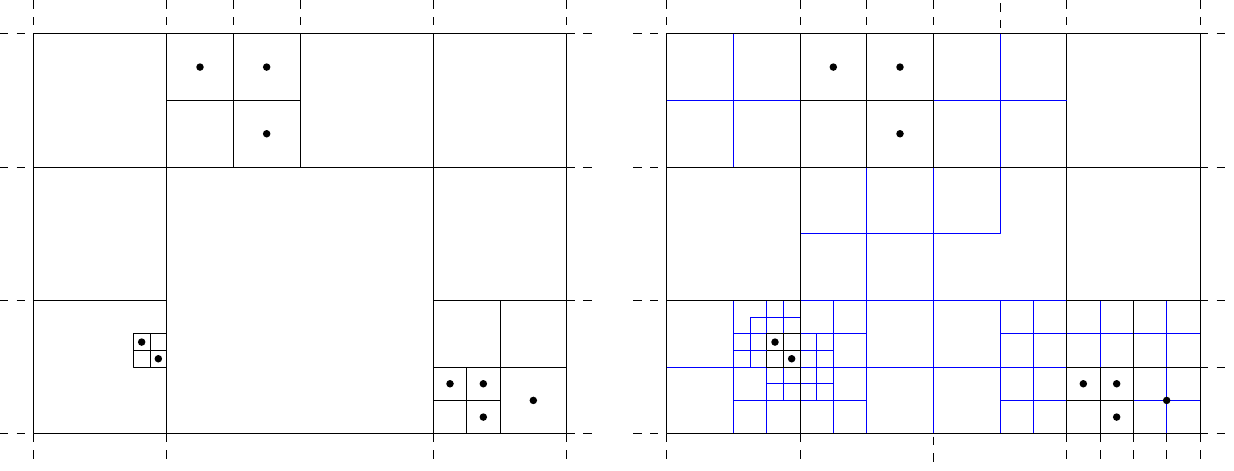}
	\caption{(Left) The subdivision corresponding to the initial quadtree $\mathcal T$.
	(Right) The subdivision after refinement, corresponding to the quadtree $\mathcal T'$.
	\label{fig:intuition}}
\end{figure}

We say that two quadtree boxes in $\IR^{D-1}$ are {\it adjacent} if they
intersect along their boundaries, but not in their interior.
The two lemmas below will help us bound the number of representative points for
each cell of our AVD. 

\begin{lemma}\label{lem:adj1}
	Let $\nu'$ be a leaf cell of $\mathcal  T'$. Then the number of 
 compressed nodes $\nu$ of $\mathcal T$ such that $\cell_x(\nu)$ 
	is adjacent to $\cell_x(\nu')$	is $2^{O(D)}$.
\end{lemma}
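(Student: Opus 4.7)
My plan has two main steps: first bound the sizes of the adjacent compressed cells using the refinement, then count them direction by direction.

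For the size bound, I would show that every compressed $\nu \in \mathcal T$ whose cell is adjacent to $\cell_x(\nu')$ satisfies $|\cell_x(\nu)| \geq |\cell_x(\nu')|$. Suppose otherwise: because $\nu$ is a nonempty node of $\mathcal T$, the refinement inserts all horizontal neighbors of $\cell_x(\nu)$ into $\mathcal T'$, in particular the one on the side where $\cell_x(\nu)$ meets $\cell_x(\nu')$. Since quadtree boxes are always either nested or interior-disjoint, this horizontal neighbor---being smaller than $\cell_x(\nu')$ and meeting its boundary---lies strictly inside $\cell_x(\nu')$. But then $\cell_x(\nu')$ would admit a proper descendant in $\mathcal T'$, contradicting that $\nu'$ is a leaf.

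I would then classify the adjacent compressed cells by the combinatorial face (face, edge, or vertex) of $\cell_x(\nu')$ that they touch; there are $3^{D-1}-1 = 2^{O(D)}$ such incidence types. For a fixed incidence type, quadtree alignment forces all candidate cells of size at least $|\cell_x(\nu')|$ to be positioned at the corresponding corner or face of $\cell_x(\nu')$ and hence to be pairwise nested, so the adjacent compressed nodes realizing that incidence form a chain $\mu_1 \supsetneq \mu_2 \supsetneq \cdots$ sharing a common corner on $\partial \cell_x(\nu')$. To bound the length of each such chain by a constant depending only on $D$, I would exploit that the unique child of a compressed $\mu_i$ is its ordinary inner cell $\mu_{i,1}$ and that $\mu_{i+1}$, being compressed and pinned to the same shared corner, must sit strictly inside $\mu_{i,1}$ at that corner. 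The ordinariness of $\mu_{i,1}$ gives a nonempty non-corner child whose size-$|\mu_{i,1}|/2$ horizontal neighbors are inserted by the refinement; I would use these insertions, together with the analogous ones coming from each $\mu_{i,1}$ further down the chain, to force a box strictly smaller than $\cell_x(\nu')$ to fall inside $\cell_x(\nu')$ as soon as the chain exceeds a $D$-dependent threshold, again contradicting the leaf property. Multiplying the resulting per-incidence constant by the $2^{O(D)}$ incidence types then yields the claimed bound.

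The main obstacle, and the technical heart of the proof, is this chain-length argument: one has to track carefully, across the $2^{D-1}$ possible positions for the non-corner children of each inner cell, how their inserted horizontal neighbors interact with $\cell_x(\nu')$ as the chain deepens, and rule out the possibility that all off-corner children happen to lie on the ``far side'' of their respective $\mu_{i,1}$ and thus systematically avoid $\cell_x(\nu')$. The quadtree alignment constraints in $D-1$ dimensions should be what forces any such evasive configuration to fail after a bounded number of steps.
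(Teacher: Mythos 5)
Your step 1 is a clean observation, and it is actually \emph{stronger} than what the paper establishes in the corresponding place: you show directly that \emph{no} compressed node of $\mathcal T$ with $|\cell_x(\nu)|<|\cell_x(\nu')|$ can have $\cell_x(\nu)$ adjacent to $\cell_x(\nu')$, whereas the paper only derives a contradiction from the weaker hypothesis that at least $3^{D-1}$ such cells exist, via a pigeonhole argument. Your reflection argument (a same-size horizontal neighbor of the smaller nonempty box must be a proper quadtree sub-box of $\cell_x(\nu')$, which the refinement step then inserts) is correct.

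The problem is step 2, and it is a genuine gap rather than a detail you could tidy up: the chain-length bound you hope to prove is not true. Concretely, consider $D=2$ and $\cell_x(\nu')=[1/2,\,1/2+2^{-m}]$, and put all the input points to the left of $x=1/2$. Take
\[
B_j=[\,1/2-4^{\,j-1}2^{-m},\;1/2\,],\qquad I_j=[\,1/2-2\cdot 4^{\,j-1}2^{-m},\;1/2\,],
\]
and arrange the points so that each $B_j$ has only one nonempty child (the one touching $x=1/2$), its inner cell is $I_j$, and $I_j$ is ordinary because its \emph{left} child contains one stray point while its right child contains $B_{j-1}$. Then $B_1\subsetneq I_2\subsetneq B_2\subsetneq I_3\subsetneq\cdots$ with every $B_j$ a compressed node of $\mathcal T$, every $\cell_x(B_j)$ adjacent to $\cell_x(\nu')$ at $x=1/2$, and the chain has length $\Theta(m)$. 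All the ``off-corner'' nonempty children you were hoping to exploit lie on the far side of each $I_j$ (away from $\cell_x(\nu')$), so the refinement never inserts a proper sub-box of $\cell_x(\nu')$: the only box inside $[1/2,\,1/2+2^{-m}]$ ever inserted is $\cell_x(\nu')$ itself (as a horizontal neighbor of $B_1$), and it remains a leaf of $\mathcal T'$. So the evasive configuration you flagged is not ruled out by quadtree alignment in $D-1$ dimensions; it is realizable, and your plan of forcing a box strictly smaller than $\cell_x(\nu')$ into $\cell_x(\nu')$ after a $D$-dependent number of steps cannot go through.

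For comparison, the paper handles the large-or-equal case not by a chain-length argument but by a terse appeal to the fact that the leaf cells and compressed-node regions of $\mathcal T$ form a subdivision refined by those of $\mathcal T'$, and it reads ``adjacent'' through that subdivision. Under your literal reading --- counting compressed $\nu$ with the \emph{full} box $\cell_x(\nu)$ touching $\cell_x(\nu')$ --- the count above is unbounded, which is precisely why your route breaks down; the construction only makes sense if the quantity being bounded is adjacency of the \emph{regions} $\rg(\nu)$ (in the example, only $\rg(B_1)$ touches $\cell_x(\nu')$, the other $\rg(B_j)$ are far away). You would need to reformulate step 2 around the subdivision cells $\rg(\nu)$, not the boxes $\cell_x(\nu)$, for a per-direction constant bound to have any chance of holding.
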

\begin{figure}
	\centering
	\includegraphics{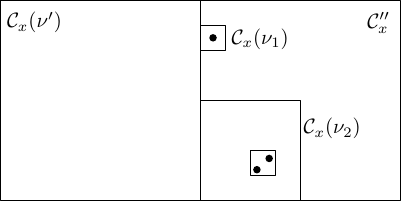}
	\caption{Proof of Lemma~\ref{lem:adj1}. The node $\nu_2$ is a compressed node.\label{fig:adj1}}
\end{figure}
\begin{proof}
	The leaf cells and the regions of the compressed nodes of $\mathcal T$ 
	form a subdivision of $\IR^{D-1}$.
	The leaf cells and the regions of the compressed nodes of $\mathcal T'$ 
	form a refinement of this subdivision.
	So there are at most $2^{O(D)}$ such
	cells $\cell_x(\nu)$ adjacent to $\cell_x(\nu')$ of size larger than or
	equal to the size of $\cell_x(\nu')$. So in the following, we consider 
	cells $\cell_x(\nu)$ whose size is smaller than the size of $\cell_x(\nu')$.

	For the sake of contradiction, suppose that there are at least $3^{D-1}$ such nodes
	$\nu$ such that the size of	$\cell_x(\nu)$ is smaller than that of $\cell_x(\nu')$.
	Then two of them, $\cell_x(\nu_1)$ and $\cell_x(\nu_2)$, must be contained
	in a quadtree box $\cell_x''$ that has the same size as $\cell_x(\nu)$ and
	is adjacent to it. (See \figurename~\ref{fig:adj1}.) This is impossible, as it
	would have caused $\cell_x(\nu')$ to be partitioned at the refinement step.		
\end{proof}
\begin{lemma}\label{lem:adj2}
	Let $\nu'$ be a compressed node of $\mathcal  T'$. Then the number of 
	compressed nodes	$\nu$ of $\mathcal T'$ such that $\cell_x(\nu)$ 
	is adjacent to $\rg(\nu')$	is $2^{O(D)}$.
\end{lemma}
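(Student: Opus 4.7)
The plan is to mirror the proof of Lemma~\ref{lem:adj1} after reducing adjacency to $\rg(\nu')$ to adjacency to one of the two quadtree boxes that bound it. Since $\nu'$ is a compressed node of $\mathcal{T}'$ with child $\nu'_1$, the region $\rg(\nu')=\cell_x(\nu')\setminus\cell_x(\nu'_1)$ is bounded by $\partial\cell_x(\nu')$ on the outside and $\partial\cell_x(\nu'_1)$ on the inside. Because the leaf cells and compressed-node regions of $\mathcal{T}'$ partition $\IR^{D-1}$, the interior of any cell $\cell_x(\nu)$ with $\nu\neq\nu'$ must be disjoint from the interior of $\rg(\nu')$. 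Hence if $\cell_x(\nu)$ is adjacent to $\rg(\nu')$ then either (i) $\cell_x(\nu)$ is disjoint from the interior of $\cell_x(\nu')$ and its boundary meets $\partial\cell_x(\nu')$, or (ii) $\cell_x(\nu)$ is contained in $\cell_x(\nu'_1)$ and its boundary meets $\partial\cell_x(\nu'_1)$.

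For case~(i), I would apply the argument of Lemma~\ref{lem:adj1} verbatim with $\cell_x(\nu')$ in place of the leaf cell. A standard packing bound gives $2^{O(D)}$ adjacent compressed cells of size at least that of $\cell_x(\nu')$. For strictly smaller ones, if there were $3^{D-1}$ or more, pigeonhole would place two of them inside a common quadtree box $\cell_x''$ of the same size as $\cell_x(\nu')$ horizontally adjacent to $\cell_x(\nu')$; the refinement step, while processing either of these small cells, would then insert horizontal neighbors into $\rg(\nu')$, contradicting the fact that $\nu'$ is compressed in $\mathcal{T}'$.

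For case~(ii), I would run the same argument with $\cell_x(\nu'_1)$ playing the role of $\cell_x(\nu')$. This is exactly the place where the refinement's explicit insertion of horizontal neighbors of the \emph{inner} cell of every compressed node of $\mathcal{T}$ is used: any small compressed cell $\nu$ whose $\cell_x(\nu)$ is adjacent to $\partial\cell_x(\nu'_1)$ from inside $\cell_x(\nu'_1)$ would, by the same pigeonhole and contradiction argument, force inserted boxes to appear inside $\cell_x(\nu'_1)$, destroying the compression structure around $\nu'$. Summing the two cases yields the desired $2^{O(D)}$ bound.

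The main obstacle is not the case analysis itself but bridging between compressed nodes of $\mathcal{T}'$ and the refinement operations, which are defined on $\mathcal{T}$: the insertion guarantees for the outer and inner horizontal neighbors are phrased in terms of the compressed nodes of $\mathcal{T}$, while $\nu'$ is only assumed to be compressed in $\mathcal{T}'$. I expect to resolve this by an auxiliary observation that every compressed node $\nu'$ of $\mathcal{T}'$ has its outer cell $\cell_x(\nu')$ and its inner cell $\cell_x(\nu'_1)$ each arising from (or contained in) a cell of $\mathcal{T}$ whose refinement guarantees are strong enough to drive the contradiction, so the arguments of cases~(i) and~(ii) apply uniformly.
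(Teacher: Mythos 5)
Your proposal follows the paper's proof essentially verbatim: split adjacency to $\rg(\nu')$ into adjacency along the outer boundary $\partial\cell_x(\nu')$ and along the inner boundary $\partial\cell_x(\nu'_1)$, then run the packing-plus-pigeonhole argument of Lemma~\ref{lem:adj1} on each of the two boundaries, with the refinement step (which explicitly inserted the horizontal neighbors of both the outer and inner cells of every compressed node) supplying the contradiction. Your closing remark about the mismatch between ``compressed nodes of $\mathcal T'$'' in the statement and the refinement guarantees phrased on $\mathcal T$ is a fair observation of a slight imprecision in the paper (the way the lemma is used later is for compressed nodes of $\mathcal T$), but it does not change the structure of the argument, which matches the paper's.
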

\begin{proof}
	The same proof as for Lemma~\ref{lem:adj1} shows that there are $2^{O(D)}$ such 
	cells that are adjacent to $\rg(\nu')$ along its outer boundary. A similar argument
	again shows that there are $2^{O(D)}$ such cells adjacent along the inner boundary.
\end{proof}

\paragraph{The approximate Voronoi diagram}
Our AVD consists of one region $V(\nu')$ for each node $\nu'$ of $\mathcal T'$:
\begin{itemize}
	\item If $\nu'$ is an ordinary node, then the Voronoi region $V(\nu')$ is $\cell(\nu')$.
	\item If $\nu'$ is a leaf node, then the Voronoi region $V(\nu')$ is the union
		of $\cell(\nu')$ and all its descendants.
		In other words, $V(\nu')$ is the set of all the points that are
		in or below $\cell(\nu')$.
	\item If $\nu'$ is a compressed node, let $\nu'_1$ be its child. Then $V(\nu')$ is
		the  set of all the points that are on or below $\cell(\nu')$, and are not on or below
		$\cell(\nu'_1)$.
\end{itemize}

\paragraph{Preprocessing}

We first compute, for every node $\nu'$ of $\mathcal T'$ such that $\cell(\nu')$ is not empty, 
the point $h(\nu')$ in $P$
with highest $z$-coordinate that is recorded in the subtree rooted at $\nu'$.
(Again, we break ties by taking the point $p_i$ with smallest index $i$ among
these highest points.)
We can compute all these points in linear time by traversing the tree $\mathcal T'$ from bottom
to top.

Next, we compute the $d_2$-nearest neighbor $n_2(\nu')=n_2(b(\nu'))$ for all the nodes $\nu'$ of
$\mathcal T'$ as follows. We compute these points by traversing the tree from top to bottom, so
when we compute $n_2(\nu')$, we already know $n_2(\nu'')$ where $\nu''$
is the parent of $\nu'$. Let $\rho$ be the $d_2$-shortest path from $b(\nu')$ to
$n_2(\nu')$. 
The 4 possible cases are listed below. We compute the nearest point for each case, 
and record the nearest one as $n_2(\nu')$.
\begin{enumerate}
	\item If $\rho$ goes through $b(\nu'')$, then 
		$n_2(\nu')=n_2(\nu'')$, and we are done as we already know $n_2(\nu'')$.
	\item If $\rho$ is a downward path, or if $b(\nu) \in P$, then $n_2(\nu)=h(\nu)$.
	\item If $\rho$ starts with a horizontal move from $b(\nu')$, then 
		$n_2(\nu')$ is the highest point that is on or below a cell $\cell'$ that 
		is a horizontal neighbor of $\cell(\nu')$. 	We can find this highest point
		in $O(D \log n)$ time by performing a cell query in $\mathcal T$, and since there are $2^{O(D)}$
		such cells $\cell'$, we find $n_2(\nu)$ in $2^{O(D)} \log n$ time.
	\item Otherwise, $\rho$ first goes upward from $b(\nu')$, then follows a bridge $r'r$,
		and then goes downward to $n_2(\nu')$. As $\rho$ does not go through $b(\nu'')$, it
		follows that $\nu''$ is a compressed node. Then $\cell(r)$ is not stored 
		in $\mathcal T'$, because if it were the case, we would have created a node
		for $\cell(r')$ in $\mathcal T'$, and $\nu''$ could not be the parent of $\nu'$. 
		So $r$ is in $V(\nu)$ for some 
		compressed node $\nu$ of $\mathcal T$,  and we have $n_2(b(\nu))=h(\nu)$.  
		As $\cell_x(\nu')$ and $\cell_x(\nu)$ must be adjacent, by Lemma~\ref{lem:adj2},
		there are $2^{O(D)}$ candidates for $\nu$, so we can find it in $2^{O(D)}\log n$ time. 
\end{enumerate}

\paragraph{Finding the representative points}
We now explain how we choose the representative points for each region $V(\nu')$.
For each node $\nu'$ of $\mathcal T'$, we pick $n_2(\nu')$ as a representative point.
If $\nu'$ is an ordinary cell, then we do not add any other representative point.

If $\nu'$ is a leaf node or a compressed node, then for each compressed node $\nu$ of
$\mathcal T$ such that $\cell_x(\nu)$ is adjacent to $\cell_x(\nu')$ or $\rg(\nu')$, respectively, 
we add $h(\nu)$ 
as a representative point.  By Lemma~\ref{lem:adj1} and Lemma~\ref{lem:adj2}, there
are $2^{O(D)}$ such points.

We now prove that for each point $q$ in a Voronoi cell $V(\nu')$, 
 one of these representative points is
$n_2(q)$. So suppose that $n_2(q) \neq n_2(\nu')$, and thus $\nu'$ is
not an ordinary node. Let $\rho$ be the $d_2$-path
from $q$ to $n_2(q)$. As $\nu'$ is either a leaf node or a compressed node,
there is no point in $P$ below $q$, so $\rho$ cannot be a downward path.
As $n_2(q) \neq n_2(\nu')$, it cannot be an upward path either. So it must have
a bridge $r'r$, where $r$ is on or above $n_2(q)$. If $\cell(r)$ is recorded
in a node $\nu(r)$ of $\mathcal T$, then at the refinement stage, we
must have inserted into $\mathcal T'$ the node $\nu(r')$ such that
$\cell(\nu(r'))=\cell(r')$, and then $n_2(q)=n_2(\nu')$, a contradiction.
Therefore, $\cell(r)$ is not recorded in $\mathcal T$. Hence, $r$ is in
$\rg(\nu)$ for some compressed node $\nu$ of $\mathcal T$, such that
$\cell_x(\nu)$ is adjacent to $\cell_x(\nu')$. It follows that $n_2(q)=h(\nu)$,
which is one of our representative points. 

\paragraph{Result} As $\mathcal T'$ records $2^{O(D)}n$ nodes, it can be
computed in $2^{O(D)}n \log n$ time, and it can be queries in $O(D+\log n)$
time. Since our data structure returns an exact nearest neighbor with respect to $d_2$,
by Theorem~\ref{th:discmain}, it returns an approximate near neighbor with respect to
$d_1$, with additive error 2.
So we obtained the following result:
\begin{theorem}\label{th:nn2}
	Let $P$ be a subset of $\discrete^D$ of size $n$.
	Then we can compute in $2^{O(D)} n \log n$ time an AVD of $P$ with $2^{O(D)}n$ regions,
	and $2^{O(D)}$  representative points per region. 
	Using this diagram, given a query point $q \in \poincare^D$,
	a point $n_2(q)\in P$ such that $d_2(q,n_2(q))$ is minimum can be returned 
	in $2^{O(D)}+O(\log n)$ time.
	This point satisfies $d_1(q,n_2(q))  \leq d_1(q,n_1(q))+2$, 
	where $n_1(q)$ is a point in $P$ such that $d_1(q,n_1(q))$ is minimum.
\end{theorem}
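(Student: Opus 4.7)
The construction is already laid out in the paragraphs preceding the theorem; the proof amounts to verifying (i) the size and running-time bounds for $\mathcal{T}'$ and the preprocessing, (ii) correctness of the recursive computation of $n_2(\nu')$, (iii) sufficiency of the chosen representatives at each region, and (iv) the $+2$ additive guarantee.

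First I would bound $|\mathcal{T}'|$. The base quadtree $\mathcal{T}$ has $O(n)$ nodes and is built in $O(Dn \log n)$ time. The refinement inserts, for each non-empty node $\nu$, the $2^{O(D)}$ horizontal neighbors of $\cell(\nu)$ (together with those of $\cell(\nu_1)$ when $\nu$ is compressed), each in $O(D \log n)$ time via a cell query. This gives $|\mathcal{T}'| = 2^{O(D)}n$ and build time $2^{O(D)}n \log n$. A single bottom-up pass computes every $h(\nu')$ in linear time. For the top-down pass computing $n_2(\nu')$, Lemma~\ref{lem:d1shortest} guarantees that the (unique) $d_2$-path $\rho$ from $b(\nu')$ to its nearest neighbor consists of an upward segment, at most one horizontal move, and a downward segment. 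The four enumerated cases exhaustively cover whether $\rho$ passes through $b(\nu'')$, is purely downward (or starts at a point of $P$), begins with a horizontal move, or turns at an ancestor strictly between $b(\nu')$ and $b(\nu'')$. In the last case the turning cell must lie in the region of a compressed node $\nu$ of $\mathcal{T}$ with $\cell_x(\nu)$ adjacent to $\cell_x(\nu')$: otherwise refinement would have materialised the cell into $\mathcal{T}'$, reducing us to an earlier case. Lemma~\ref{lem:adj2} bounds the number of such candidate $\nu$ by $2^{O(D)}$, so Case~4 costs $2^{O(D)} \log n$ per node and the preprocessing totals $2^{O(D)} n \log n$.

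The crux is showing that for each query $q \in V(\nu')$, one of the representatives listed for $V(\nu')$ is in fact $n_2(q)$. If $\nu'$ is ordinary then $V(\nu') = \{b(\nu')\}$ and there is nothing to check. Otherwise, assume $n_2(q) \neq n_2(\nu')$. The $d_2$-path $\rho$ from $q$ to $n_2(q)$ cannot be purely downward, because no point of $P$ lies below $q$ (as $\nu'$ is a leaf or compressed node of $\mathcal{T}'$), and it cannot go purely upward through $b(\nu')$, since that would force $n_2(q) = n_2(\nu')$. Hence $\rho$ contains a bridge $r'r$. If $\cell(r)$ were a node of $\mathcal{T}$, refinement would have inserted $\cell(r')$ into $\mathcal{T}'$, again forcing $n_2(\nu') = n_2(q)$, a contradiction. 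Consequently $r \in \rg(\nu)$ for some compressed node $\nu$ of $\mathcal{T}$ whose cell is adjacent to $\cell_x(\nu')$ or $\rg(\nu')$, and $n_2(q) = h(\nu)$, which was added as a representative by Lemmas~\ref{lem:adj1} and~\ref{lem:adj2}.

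The query algorithm point-locates $q$ in $\mathcal{T}'$ in $O(D \log n)$ time and then scans the $2^{O(D)}$ representatives of its region, matching the claimed $2^{O(D)} + O(\log n)$ query time. The $+2$ bound follows immediately from Theorem~\ref{th:discmain}: $d_1(q, n_2(q)) \leq d_2(q, n_2(q)) \leq d_2(q, n_1(q)) \leq d_1(q, n_1(q)) + 2$. I expect the principal obstacle to be the parallel case analyses in Case~4 of the $n_2(\nu')$ recursion and in the representative-sufficiency argument: both rely on the subtle claim that whenever the bridge endpoint $r$ does not correspond to a node that the refinement created in $\mathcal{T}'$, it must sit inside the region of a compressed node of $\mathcal{T}$ adjacent to the current cell. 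Establishing this cleanly requires combining the structural description of $d_2$-paths with a careful reading of exactly which horizontal neighbours the refinement step inserts.
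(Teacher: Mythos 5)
Your proposal reproduces the paper's argument in the same order and with the same key ingredients: the refinement step to build $\mathcal T'$, Lemmas~\ref{lem:adj1} and~\ref{lem:adj2} to bound the adjacent compressed nodes, the four-case recursion for $n_2(\nu')$ via the structure of $d_2$-paths, the representative-sufficiency argument via the bridge endpoint $r$, and the chain $d_1(q,n_2(q)) \leq d_2(q,n_2(q)) \leq d_2(q,n_1(q)) \leq d_1(q,n_1(q))+2$ from Theorem~\ref{th:discmain}. One small misattribution: the up-bridge-down structure of a $d_2$-path is definitional (at most one horizontal move), not a consequence of Lemma~\ref{lem:d1shortest}, which concerns $d_1$-paths.
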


The discrete AVD presented above can also be turned into an AVD for a set of points $P$
in $H^{D}$, by constructing the $d_2$-AVD for the points $\{b(p): p \in P\}$,
and replacing each point $p$ in a Voronoi region by the whole box $\cell(p)$, and thus
we obtain a partition of $[0,1]^{D-1} \times (0,2] \subset \poincare^D$.
Then for a query point $q \in H^D$, we return a  point in $P$ such
that $b(p)=n_2(b(q))$. 
Theorem~\ref{th:embedmain} implies that this point is an
$O(\log D)$-additive approximate near neighbor:
\begin{corollary}\label{cor:nnh}
	Let $P$ be a subset of $\poincare^D$ of size $n$. 
	Then we can compute in $2^{O(D)}n \log n$ time an AVD of $P$ with $2^{O(D)}n$ regions,
	and $2^{O(D)}$  representative points per region. 
	Using this diagram,	for any query point $q$,
	we can return in $2^{O(D)}+O(\log n)$ time
	a point $r \in P$ such that $d_H(q,r)\leq d_H(q,n_H(q))+O(\log D)$,
	where $n_H(q)$ is a the point in $P$ such that $d_H(q,n_H(q))$ is minimum.
\end{corollary}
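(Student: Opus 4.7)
The plan is to reduce directly to Theorem~\ref{th:nn2} via the embedding $b : \poincare^D \to \discrete^D$. First I would form the discrete point set $B = \{b(p) : p \in P\}$, keeping for each element of $B$ a pointer back to one of its preimages in $P$. As in Section~\ref{sec:spannerdiscrete}, after a scaling and horizontal translation (both of which preserve $d_H$) we may assume $x(P) \subset [1/4,1/2]^{D-1}$, so that the hypotheses of Theorem~\ref{th:nn2} apply. I then invoke Theorem~\ref{th:nn2} on $B$ to obtain, in $2^{O(D)}n\log n$ time, a $d_2$-AVD with $2^{O(D)}n$ regions and $2^{O(D)}$ representatives per region. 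To lift this to a subdivision of $\poincare^D$, I replace each discrete point $b(\cell)$ appearing in a region with the hyperbolic cell $\cell$, yielding a partition of $[0,1]^{D-1}\times(0,2]$ whose point-location is handled by the same compressed quadtree $\mathcal{T}'$. Each representative $b(p)$ carries along its preimage $p \in P$ so that these act as the hyperbolic representatives.

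To answer a query $q \in \poincare^D$, I compute $b(q)$ in $O(D)$ time from its coordinates, locate the region of the AVD containing $b(q)$ in $O(D+\log n)$ time, and then among the $2^{O(D)}$ representatives return the preimage $r \in P$ of $n_2(b(q))$ (equivalently, the preimage minimizing $d_2(b(q), b(\cdot))$ over the region's representatives). The total query time is $2^{O(D)}+O(\log n)$.

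For the error bound, let $n_H(q) \in P$ be the true hyperbolic nearest neighbor of $q$. Applying Theorem~\ref{th:embedmain} twice gives
\[
    d_H(q,r) \leq \ln(2)\cdot d_2(b(q),b(r)) + O(\log D)
\]
and
\[
    \ln(2)\cdot d_2(b(q),b(n_H(q))) \leq d_H(q,n_H(q)) + O(\log D).
\]
The correctness guarantee of Theorem~\ref{th:nn2} ensures $b(r) = n_2(b(q))$ is an exact $d_2$-nearest neighbor in $B$, so $d_2(b(q),b(r)) \leq d_2(b(q),b(n_H(q)))$. Chaining these three inequalities yields $d_H(q,r) \leq d_H(q,n_H(q)) + O(\log D)$, which is the claimed additive guarantee.

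The main obstacle here is mostly bookkeeping rather than a conceptual barrier: one must verify that $b$ need not be injective on $P$ (storing one preimage per cell suffices because $d_H(p, p') = 0$ whenever $b(p) = b(p')$ up to an $O(\log D)$ additive term via Lemma~\ref{lem:embedding1}), handle the boundary case $x(q)\notin[0,1]^{D-1}$ by returning the globally highest input point as in Theorem~\ref{th:nn2}, and confirm that the transformation reducing to $x(P)\subset[1/4,1/2]^{D-1}$ is compatible with querying (one applies the same transformation to $q$). None of these points is deep, so the corollary follows by composing Theorem~\ref{th:nn2} with Theorem~\ref{th:embedmain}.
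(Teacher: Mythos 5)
Your proposal is correct and follows essentially the same route as the paper: build the $d_2$-AVD of $\{b(p):p\in P\}$ via Theorem~\ref{th:nn2}, replace each discrete point by its cell to get a partition of $[0,1]^{D-1}\times(0,2]$, and then chain Theorem~\ref{th:embedmain} with the exactness of $n_2$ to get the $O(\log D)$ additive bound. The paper's own argument is stated more tersely, but the chain of inequalities you make explicit, the remark about non-injectivity of $b$, and the handling of the scaling and the out-of-range query are exactly the bookkeeping the paper leaves implicit.
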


\bibliographystyle{plainurl}
\bibliography{hyperbolic}

\end{document}